\documentclass[11pt, 3p]{elsarticle}

\usepackage[T1]{fontenc}
\usepackage[utf8]{inputenc}
\usepackage{amsmath,amssymb,amsthm,mathtools}
\usepackage{booktabs}
\usepackage{array}
\newcolumntype{L}[1]{>{\raggedright\arraybackslash}p{#1}}
\usepackage{enumitem}
\usepackage{xcolor}
\usepackage{tikz}
\usetikzlibrary{arrows.meta}
\definecolor{keptcol}{HTML}{15496B}
\definecolor{disccol}{HTML}{9AA5AD}
\definecolor{accent}{HTML}{B3541E}
\usepackage{algorithm}
\usepackage{algpseudocode}
\usepackage[hidelinks]{hyperref}

\theoremstyle{plain}
\newtheorem{theorem}{Theorem}[section]
\newtheorem{lemma}[theorem]{Lemma}
\newtheorem{proposition}[theorem]{Proposition}
\newtheorem{corollary}[theorem]{Corollary}
\theoremstyle{definition}
\newtheorem{observation}[theorem]{Observation}

\numberwithin{equation}{section}
\numberwithin{figure}{section}
\numberwithin{table}{section}
\numberwithin{algorithm}{section}

\newcommand{\ColKP}{ColKP}
\newcommand{\KP}{KP}

\newcommand{\OPT}{\mathrm{OPT}}
\newcommand{\LB}{\mathrm{LB}}
\newcommand{\UB}{\mathrm{UB}}
\newcommand{\pmax}{p_{\max}}
\newcommand{\zmin}{{z_{\min}}}
\newcommand{\zmax}{{z_{\max}}}
\newcommand{\scalefactor}{{\lambda}}
\newcommand{\ie}{i.e.,}

\DeclareMathOperator*{\argmax}{arg\,max}
\DeclareMathOperator*{\argmin}{arg\,min}

\newcommand{\items}{\mathcal{I}}
\newcommand{\colors}{\mathcal{C}}
\newcommand{\firstDP}{{\tt DP_1}}
\newcommand{\firstDPp}{{\tt DP^p_1}}
\newcommand{\secondDP}{{\tt DP_2}}
\newcommand{\secondDPp}{{\tt DP^p_2}}

\allowdisplaybreaks

\newcommand{\paperversion}{Version of \today}
\newcommand{\papertitle}{\LARGE A tight $1/3$--approximation algorithm and fully polynomial-time approximation schemes for the Colored Knapsack Problem}
\newcommand{\paperaffiliation}{Department of Computer, Control and Management
Engineering Antonio Ruberti, Sapienza University of Rome, Via Ariosto 25,
00185 Rome, Italy}

\begin{document}

\hypersetup{pageanchor=false}
\begin{frontmatter}
\title{\papertitle}
\date{\paperversion}
\author[sap]{Fabio Ciccarelli\corref{cor}}
\ead{f.ciccarelli@uniroma1.it}
\author[sap]{Fabio Furini}
\ead{fabio.furini@uniroma1.it}
\cortext[cor]{Corresponding author.}
\affiliation[sap]{organization={\paperaffiliation}}

\begin{abstract}
The \emph{Colored Knapsack Problem} (\ColKP) generalizes the classical Knapsack Problem by partitioning the items into color classes and requiring the selected items to admit an ordering in which consecutive items have different colors. The problem is weakly $\mathcal{NP}$-hard and admits two pseudo-polynomial dynamic programming (DP) algorithms proposed in the literature. These two DP algorithms have worst-case running times $O(b \, n^4)$ and $O(b^2 \, n^3)$, respectively, where $b$ is the knapsack capacity and $n$ is the number of items.
We develop the first approximation algorithm for the \ColKP. By rounding an optimal basic solution of the linear programming relaxation of its natural integer programming formulation and repairing color feasibility, we obtain a linear-time approximation-algorithm whose worst-case performance ratio is $1/3$.
We then reformulate both DP algorithms so that profit, rather than knapsack capacity, indexes their pseudo-polynomial dimension, and combine them with profit scaling to obtain two fully polynomial-time approximation schemes (FPTASs). The first FPTAS runs in $O(n^5/\varepsilon)$ time for nonnegative profits and in $O(n^6/\varepsilon)$ time for arbitrary integer profits. The second FPTAS runs instead in $O(n^5/\varepsilon^2)$ and $O(n^7/\varepsilon^2)$ time, respectively. The approximation guarantee, along with new structural insights, provides the bounds needed to control the scaled profit range and establish these running times.
\end{abstract}

\begin{keyword}
knapsack problem \sep approximation algorithms \sep fully polynomial-time approximation schemes \sep dynamic
programming
\end{keyword}

\end{frontmatter}
\hypersetup{pageanchor=true}

\section{Introduction}
\label{sec:intro}

Let $\items=\{1,2,\dots,n\}$ be a set of $n$ items and let
$b\in\mathbb{Z}_{>0}$ be the capacity of a knapsack. Each item $i\in\items$
has a weight $w_i\in\mathbb{Z}_{>0}$, a profit $p_i\in\mathbb{Z}$, and a color $\kappa_i\in\colors=\{1,2,\dots,m\}$. 
For a subset $S\subseteq\items$,
we denote its total weight and profit by
\[
    w(S)=\sum_{i\in S}w_i
    \qquad\text{and}\qquad
    p(S)=\sum_{i\in S}p_i,
\]
respectively. For each color $c\in\colors$, we further define
\[
    \items_c=\{i\in\items:\kappa_i=c\}
    \qquad\text{and}\qquad
    k_c(S)=|S\cap\items_c|,
\]
that is, the set of items of color $c$ and the number of such items in $S$.

The \emph{Colored Knapsack Problem} (\ColKP) asks for a maximum-profit subset of items $S\subseteq\items$ with $w(S)\le b$ that can be ordered so that no two consecutive items share the same color. As shown by~\citet{Borges2024}, such an ordering exists if and only if, for every color, its number of selected items is at most one greater than the number of selected items of all other colors, \ie{} if and only if
\begin{equation}
    k_c(S)\;\le\;|S\setminus \items_c|+1,\qquad\, c\in\colors.
    \label{eq:colorcons}
\end{equation}
For a subset $S\subseteq\items$, let
\[
    d(S)=\max\{k_c(S):c\in\colors\}.
\]
A color $c\in\colors$ is \emph{dominant} in $S$ if $k_c(S)=d(S)$, that is, if it has the largest number of selected items. Condition~\eqref{eq:colorcons} is then equivalent to the single inequality
\begin{equation}
    2\,d(S)\;\le\;|S|+1 ,
    \label{eq:feas}
\end{equation}
which states that the number of items of a dominant color cannot exceed the number of all selected items by more than one. Accordingly, the \ColKP{} can be stated compactly as
\begin{equation}
    \OPT \;=\; \max\bigl\{\,p(S)\;:\; S\subseteq \items,\; w(S)\le b,\;
    2 \, d(S)\le |S|+1 \,\bigr\}.
    \label{eq:ColKP}
\end{equation}
We call a subset $S$ satisfying~\eqref{eq:feas} \emph{color-feasible}, and \emph{feasible} if in addition $w(S)\le b$. A dominant color $c$ is \emph{critical} for $S$ if $k_c(S)=|S\setminus \items_c|+1$. Observe that at most one color can be critical for a given feasible \ColKP{} solution $S$.

Two features distinguish the \ColKP{} from the classical Knapsack Problem (\KP) and are essential to our analysis. First, its family of feasible sets is not closed under taking subsets: removing an item may destroy color feasibility. Consequently, items with zero or even negative profit may belong to an optimal solution, because they can act as \emph{separators} that allow additional profitable items of a dominant color to be selected. We therefore formulate the \ColKP{} with integer profits $p_i\in\mathbb{Z}$ for all $i\in\items$, rather than restricting profits to be positive.

Example~\ref{ex:running} illustrates a \ColKP{} instance with $n=5$ items, $m=2$ colors, and capacity $b=9$. If colors are ignored, $\{1,2,3\}$ is an optimal \KP{} solution with profit $19$. Nonetheless, it contains three items of color $1$ and none of color $2$, and is therefore infeasible for the \ColKP{}, since $2d=6>|S|+1=4$. $S^\star=\{2,4,5\}$ is instead an optimal \ColKP{} solution with profit $\OPT=17$. Color $2$ is its unique dominant color, with $d(S^\star)=2$, and it is also critical because $2d(S^\star)=|S^\star|+1$.

\begingroup
\renewcommand{\figurename}{Example}
\begin{figure}[]
\centering
\begin{minipage}[c]{0.42\textwidth}
\centering
\footnotesize
\renewcommand \arraystretch{1.5}
\setlength{\tabcolsep}{4.5pt}
\begin{tabular}{cccccc}
\toprule
& $i = 1$ & $i = 2$ & $i = 3$ & $i = 4$ & $i = 5$\\
\midrule
$w_i$    & 3 & 1 & 5 & 1 & 7\\
$p_i$    & 4 & 4 & 11 & 1 & 12\\
$\kappa_i$ & 1 & 1 & 1 & 2 & 2\\
\bottomrule
\end{tabular}
\end{minipage}%
\hfill
\begin{minipage}[c]{0.58\textwidth}
\centering
\begin{tikzpicture}[y=0.6cm, x=1cm, scale=0.8]
    \draw[->, thick, >=latex] (-1.5,0) -- (-1.5,10);
    \foreach \y in {0,2,4,6,8} {
        \draw[thick] (-1.6,\y) -- (-1.4,\y)
            node[left, xshift=-4pt] {\footnotesize $\y$};
    }
    \draw[thick] (-1.6,9) -- (-1.4,9)
        node[left, xshift=-4pt] {\footnotesize $b=9$};
    \draw[dashed, gray, thick] (-1.5,9) -- (8,9);

    \node at (1,-0.8) {\footnotesize (a) KP};
    \begin{scope}
        \clip (0,0) rectangle (2,3);
        \fill[white] (0,0) rectangle (2,3);
        \foreach \i in {-3,-2.8,...,3} {
            \draw[red!50, line width=0.5pt] (\i,0) -- (\i+3,3);
        }
    \end{scope}
    \draw[thick] (0,0) rectangle (2,3);
    \node at (1,1.5) {\small $i=1$};

    \begin{scope}
        \clip (0,3) rectangle (2,4);
        \fill[white] (0,3) rectangle (2,4);
        \foreach \i in {-1,-0.8,...,4} {
            \draw[red!50, line width=0.5pt] (\i,3) -- (\i+1,4);
        }
    \end{scope}
    \draw[thick] (0,3) rectangle (2,4);
    \node at (1,3.5) {\small $i=2$};

    \begin{scope}
        \clip (0,4) rectangle (2,9);
        \fill[white] (0,4) rectangle (2,9);
        \foreach \i in {-5,-4.8,...,3} {
            \draw[red!50, line width=0.5pt] (\i,4) -- (\i+5,9);
        }
    \end{scope}
    \draw[thick] (0,4) rectangle (2,9);
    \node at (1,6.5) {\small $i=3$};

    \node at (5.5,-0.8) {\footnotesize (b) ColKP};
    \begin{scope}
        \clip (4.5,0) rectangle (6.5,7);
        \fill[white] (4.5,0) rectangle (6.5,7);
        \foreach \i in {0,0.2,...,15} {
            \draw[blue!50, line width=0.5pt] (\i,0) -- (\i-7,7);
        }
    \end{scope}
    \draw[thick] (4.5,0) rectangle (6.5,7);
    \node at (5.5,3.5) {\small $i=5$};

    \begin{scope}
        \clip (4.5,7) rectangle (6.5,8);
        \fill[white] (4.5,7) rectangle (6.5,8);
        \foreach \i in {3,3.2,...,8} {
            \draw[red!50, line width=0.5pt] (\i,7) -- (\i+1,8);
        }
    \end{scope}
    \draw[thick] (4.5,7) rectangle (6.5,8);
    \node at (5.5,7.5) {\small $i=2$};

    \begin{scope}
        \clip (4.5,8) rectangle (6.5,9);
        \fill[white] (4.5,8) rectangle (6.5,9);
        \foreach \i in {4,4.2,...,9} {
            \draw[blue!50, line width=0.5pt] (\i,8) -- (\i-1,9);
        }
    \end{scope}
    \draw[thick] (4.5,8) rectangle (6.5,9);
    \node at (5.5,8.5) {\small $i=4$};

\draw[thick] (-1.5,0) -- (8,0);
\end{tikzpicture}
\end{minipage}
\caption{\ColKP{} instance with $n=5$ items, $m=2$ colors and capacity $b=9$. The item weights, profits, and colors are given in the table on the left. The figure on the right illustrates an optimal \KP{} solution and an optimal \ColKP{} solution to the instance. The vertical axis represents consumed capacity, and the height of each rectangle is the corresponding item weight. Red and blue hatching denote colors $1$ and $2$, respectively. Ignoring colors, the KP optimum in~(a) selects items $1$, $2$, and $3$. The \ColKP{} optimum in~(b) selects items $5$, $2$, and $4$, thus respecting the color constraints.}
\label{ex:running}
\end{figure}
\endgroup

\medskip

 A two-color version of the \ColKP{} was studied as a subproblem of the Black and White Bin Packing Problem (see~\citet{Balogh2013, Balogh2015}). Nonetheless, the results obtained for the two-color case do not extend to the $m$-color case. The general variant was instead introduced by~\citet{CHM2026}. To the best of our knowledge, their paper is the only previous work devoted specifically to this problem. They proposed a natural integer linear programming (ILP) formulation and studied its linear programming (LP) relaxation, proving that an optimal basic LP solution has at most two fractional items and can be computed in $O(n)$ time. They also established that the \ColKP{} is weakly $\mathcal{NP}$-hard and developed two exact pseudo-polynomial dynamic programming (DP) algorithms. The first processes the items one at a time and runs in $O(b\,n^4)$ time, whereas the second processes one color class at a time and runs in $O(b^2\,n^3)$ time. They derived dominance and fathoming rules for both algorithms and compared their implementations computationally with off-the-shelf solvers applied to the natural \ColKP{} ILP formulation.

That study settled the basic complexity status of the \ColKP{} and provided pseudo-polynomial exact algorithms, but left its approximability open: indeed, neither a polynomial-time approximation algorithm with a provable performance guarantee nor an approximation scheme was proposed. The \ColKP{} inherits weak $\mathcal{NP}$-hardness from the classical \KP{}, since a \KP{} instance can be viewed as a \ColKP{} instance where each item has a distinct color. However, the two pseudo-polynomial DPs proposed in~\cite{CHM2026} do not immediately yield an approximation scheme, nor is it clear which of the two recursions is the more suitable starting point to develop such an algorithm. This paper addresses both issues: it gives the first approximation algorithm for the \ColKP{}, to the best of our knowledge, and derives a fully polynomial-time approximation scheme from each of the two DPs of~\cite{CHM2026}.

\subsection{Approximation algorithms and approximation schemes for knapsack problems}
\label{sec:approximation-background}

In this section, we recall the definitions and results on approximation algorithms and approximation schemes that will be used throughout the paper. We also review the classical scaling argument underlying the fully polynomial-time approximation schemes for the \KP{} and several of its variants, which we later adapt to the \ColKP{} in Sections~\ref{sec:profitdp} and~\ref{sec:fptas}.

We first recall the performance measures used throughout the paper. Let
$\mathcal A$ be an algorithm for a maximization problem. For an instance $I$, let $\mathcal A(I)$ be the value of the feasible solution it returns and let $\OPT(I) > 0$ be the instance optimum. We assume that $0\le\mathcal A(I)\le\OPT(I)$. The \emph{absolute worst-case performance ratio} of $\mathcal A$ is
\begin{equation}
    R_{\mathcal A}
    :=\inf_{I}
      \left\{\frac{\mathcal A(I)}{\OPT(I)} \right\}
    \in[0,1].
    \label{eq:absolute-wcpr}
\end{equation}
If $\mathcal A$ runs in polynomial time and $R_{\mathcal A}\ge\alpha$ for some $\alpha\in(0,1]$, then $\mathcal A$ is an \emph{$\alpha$-approximation algorithm}. A guarantee $\alpha$ is \emph{tight} when $R_{\mathcal A}=\alpha$. 

\medskip

A \emph{polynomial-time approximation scheme} (PTAS) is a family of algorithms $\{\mathcal A_\varepsilon\}_{\varepsilon\in(0,1)}$ such that
\begin{equation}
    \mathcal A_\varepsilon(I)\ge(1-\varepsilon) \, \OPT(I)
    \label{eq:ptas-definition}
\end{equation}
for every instance $I$, and such that, for every fixed $\varepsilon$, the running time of $\mathcal A_\varepsilon$ is polynomial in the input size. A \emph{fully polynomial-time approximation scheme} (FPTAS) is a PTAS whose running time is polynomial both in the input size and in $1/\varepsilon$.

\medskip

The classical \KP{} is the canonical example of a weakly $\mathcal{NP}$-hard problem admitting an FPTAS (see~\citet{KPP2004} for a detailed overview). The standard scaling-based scheme starts from the pseudo-polynomial dynamic program indexed by profit. Let
\[
    W_i(z):=\min\bigl\{w(S):S\subseteq\{1,2,\dots,i\},\ p(S)=z\bigr\}
\]
be the minimum weight of a subset of the first $i$ items with profit exactly $z$. In case no such subset exists, we set $W_i(z)=+\infty$. The optimum is the largest profit $z$ for which $W_n(z)\le b$. 
Let $\pmax=\max\{p_i : i\in\items\}$ and set
\begin{equation}
    \scalefactor=\frac{\varepsilon\pmax}{n}
    \qquad \text{and}\qquad
    \hat p_i=\left\lfloor\frac{p_i}{\scalefactor}\right\rfloor,~~i \in \items.
    \label{eq:classical-scaling}
\end{equation}
For $S\subseteq\items$, write $\hat p(S)=\sum_{i\in S}\hat p_i$.
Running the same recursion on the scaled profits makes its profit range polynomially bounded, since $\sum_{i \in \items}\hat p_i\le n^2/\varepsilon$. If $S_\varepsilon$ is optimal for the scaled instance and $S^\star$ is optimal for the original one, then
\begin{equation}
\begin{aligned}
    p(S_\varepsilon)
    &\ge \scalefactor\,\hat p(S_\varepsilon)
     \ge \scalefactor\,\hat p(S^\star) \\
    &\ge p(S^\star)-n\scalefactor
     \ge (1-\varepsilon) \, \OPT,
\end{aligned}
    \label{eq:classical-scaling-bound}
\end{equation}
where the last inequality uses $\pmax\le\OPT$. Scaling therefore trades an additive loss of less than $\scalefactor$ per selected item for a profit-indexed table of polynomial size. This is the basic idea that we adapt to the \ColKP{} in Sections~\ref{sec:profitdp} and~\ref{sec:fptas}.

\citet{Sahni1975} gave the first PTAS for the \KP{}, based on the enumeration of a bounded number of high-profit items. The first published FPTAS is instead due to \citet{IbarraKim1975}. \citet{Lawler1979} refined that scheme and improved its complexity bounds, and \citet{MagazineOguz1981} proposed a further dynamic-programming-based scheme.
Later work continued improving the running time and, in some cases, the space requirements~\cite{KellererPferschy1999,KellererPferschy2004,Chan2018}. A recent sequence of improvements~\cite{Jin2019,DengJinMao2023} led to the $\widetilde O(n+1/\varepsilon^2)$-time FPTAS of~\citet{Chen2025}. 

FPTASs are also known for several variants of the \KP{}; for a broader survey, see~\citet{Cacchiani2022}. Binary and integer versions of the \emph{Multiple-Choice Knapsack Problem} were studied from an approximation perspective by~\citet{ChandraHirschbergWong1976}. In their binary version, the items are partitioned into classes and at most one item is selected from each class. The integer version permits instead multiple copies of the selected item type. An early exact study of the exact-one binary variant, including two branch-and-bound algorithms, is due to~\citet{Nauss1978}, while~\citet{Lawler1979} proposed an FPTAS for the multiple-choice problem. For the $k$-item \KP{}, which imposes an upper bound on the cardinality of the selected set, \citet{Caprara2000} gave a pseudo-polynomial dynamic program, a PTAS, and an FPTAS. More involved side constraints require problem-specific arguments. The \emph{Knapsack Problem with Conflicts} is strongly $\mathcal{NP}$-hard on general conflict graphs, whereas \citet{PferschySchauer2009} derived pseudo-polynomial dynamic programs, and hence FPTASs, for bounded-treewidth and chordal conflict graphs. A more recent exact branch-and-bound algorithm is given by \citet{Coniglio2021}. The \emph{Knapsack Problem with Setup} was studied by~\citet{ChajakisGuignard1994}, who proposed exact algorithms, including a pseudo-polynomial DP. \citet{ChebilKhemakhem2015} developed a further DP, while~\citet{FuriniMonaciTraversi2018} embedded tailored LP-relaxation algorithms in an implicit-enumeration scheme and ran it in parallel with an improved DP. Moreover, \citet{PferschyScatamacchia2018} proved an inapproximability result for the general problem and derived FPTASs for three relevant special cases. Finally, \citet{DAmbrosio2018} introduced the \emph{Product Knapsack Problem}, whose objective is to maximize the product of the selected item profits, which may be positive or negative, and proposed mixed-integer linear and nonlinear formulations and a pseudo-polynomial exact DP algorithm. \citet{PferschySchauerThielen2021} subsequently gave the first FPTAS for this problem.

\subsection{Contributions and outline of the paper}

Section~\ref{sec:structure} first establishes the structural properties needed to handle arbitrary integer profits. It shows that, for any \ColKP{} instance, items with profit at most $-\pmax$ can be removed in linear time without changing the optimal value, so that all relevant profits belong to the $(-\pmax,\pmax]$ interval. It also proves that an optimal solution containing a negative-profit item must have a unique critical color and can contain at most $(n-1)/2$ negative-profit items. These properties make it possible to bound the signed profit ranges used later.

Section~\ref{sec:constant} then develops the first approximation algorithm for the \ColKP{}. Starting from an optimal basic solution of the natural \ColKP{} formulation LP relaxation, we round down its fractional variables (at most two) and, when necessary, repair color feasibility by removing one item. Comparing the resulting set with the most profitable singleton yields a $1/3$--approximation in $O(n)$ time. We also show that the absolute worst-case performance ratio of this algorithm is exactly $1/3$. Besides being a result of independent interest, the algorithm provides lower and upper bounds on $\OPT$ that differ by at most a factor of three and are therefore crucial for controlling the scaled profit range.

Sections~\ref{sec:profitdp} and~\ref{sec:fptas} build the main approximation scheme. We first reformulate the item-by-item dynamic program $\firstDP$ of~\cite{CHM2026} by exchanging the roles of capacity and profit. We then combine this profit-indexed recursion with scaling and the aforementioned bounds to obtain an FPTAS that runs in $O(n^5/\varepsilon)$ time for nonnegative profits and in $O(n^6/\varepsilon)$ time for arbitrary integer profits.

Section~\ref{sec:secondDP} investigates whether the alternative color-by-color recursion $\secondDP$ of~\cite{CHM2026} can yield a faster approximationscheme. Its profit-indexed reformulation also leads to an FPTAS, but the resulting running times are $O(n^5/\varepsilon^2)$ for nonnegative profits and $O(n^7/\varepsilon^2)$ for arbitrary integer profits, showing that the item-by-item recursion provides a better basis for the direct scaling approach developed here.

Finally, Section~\ref{sec:conclusions} summarizes the results and outlines directions for future research.

\section{Structural properties of \ColKP{} solutions}
\label{sec:structure}

Throughout this section, and in the rest of the paper, we let
\begin{equation}
    \pmax \;=\; \max\left\{p_i:i\in\items\right\}
    \label{eq:pmax}
\end{equation}
denote the largest item profit. Moreover, in the remainder of the paper we assume without loss of generality that $w_i\le b$ for all $i\in \items$. 

\begin{observation}\label{obs:singleton}
The empty set $S = \emptyset$ is a feasible solution to the \ColKP, and so is every singleton $S=\{i\}$ with $i\in \items$,
because $|S|=d(S)=1$ satisfies~\eqref{eq:feas} and $w_i\le b$ for all $i\in \items$.
Consequently
\begin{equation}
    \OPT\;\ge\;\pmax\;\ge\;0 .
    \label{eq:OPTgepmax}
\end{equation}
Moreover, if $\pmax\le0$ then the empty set is optimal and problem~\eqref{eq:ColKP} is solved in $O(n)$ time. We therefore assume $\pmax>0$ from now on.
\end{observation}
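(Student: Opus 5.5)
The plan is to check the three claims of the observation in turn, directly from the definitions in~\eqref{eq:feas} and~\eqref{eq:ColKP}, together with the standing assumption $w_i\le b$ for all $i\in\items$.

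\textbf{Feasibility of the empty set.} For $S=\emptyset$ we have $k_c(\emptyset)=0$ for every color, so $d(\emptyset)=0$. Then $2\,d(S)=0\le 1=|S|+1$, and $w(\emptyset)=0\le b$, so $\emptyset$ is feasible.

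\textbf{Feasibility of singletons.} For $S=\{i\}$, exactly one color has count one and all others have count zero. Hence $d(S)=1=|S|$, and $2\,d(S)=2=|S|+1$, so~\eqref{eq:feas} holds (with equality, so $\kappa_i$ is critical). Its weight is $w_i\le b$. Thus every singleton is feasible.

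\textbf{The bound~\eqref{eq:OPTgepmax}.} Since $\OPT$ is a maximum over a family containing all singletons and the empty set, we get $\OPT\ge p(\{i\})=p_i$ for every $i$, hence $\OPT\ge\pmax$. We also get $\OPT\ge p(\emptyset)=0$, which gives $\OPT\ge 0$. Together these yield $\OPT\ge\max\{\pmax,0\}$, which is the content of~\eqref{eq:OPTgepmax}. Read literally, the chain $\pmax\ge 0$ holds only after the reduction in the next paragraph, which is where it is used.

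\textbf{The case $\pmax\le 0$.} If $\pmax\le 0$, every item has $p_i\le 0$, so every subset satisfies $p(S)\le 0=p(\emptyset)$. The empty set is feasible, so it is optimal. Detecting this case requires computing $\pmax$, a single $O(n)$ scan. There is no real obstacle here; the only point worth stating carefully is that the problem is not subset-closed, and this is why feasibility of $\emptyset$ and of singletons must be verified directly rather than inherited from some larger feasible set.
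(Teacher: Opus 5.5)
Your proposal is correct and follows the same direct check as the paper: you verify $2\,d(S)\le|S|+1$ and $w(S)\le b$ for $\emptyset$ and for each singleton, then compare $\OPT$ with $p(\{i\})$ and $p(\emptyset)$. Your remark is also accurate: read literally, the chain $\OPT\ge\pmax\ge0$ should be understood as $\OPT\ge\max\{\pmax,0\}$ until the case $\pmax\le0$ has been disposed of, which refines the paper's phrasing.
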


Observation~\ref{obs:singleton} is the reason why an FPTAS is meaningful at all for a problem with possibly negative profits: the optimal value is guaranteed to be nonnegative, and in fact bounded from below by the largest item profit, which is exactly the quantity that drives the classical scaling argument for the classical \KP.

The next lemma describes two ways in which a feasible \ColKP{} solution can be modified without losing color feasibility. It is used repeatedly in the rest of the paper.

\begin{lemma}
\label{lem:removal}
Let $S$ be a feasible \ColKP{} solution, let $t=|S|$ and $d=d(S)$.
\begin{enumerate}[label=(\alph*)]
    \item If $i\in S$ belongs to a dominant color of $S$, then $S\setminus\{i\}$ is color-feasible.
    \item If $i\in S$ belongs to a dominant color of $S$ and $j\in S$ belongs to a color that is \emph{not} dominant in $S$, then $S\setminus\{i,j\}$ is
    color-feasible.
\end{enumerate}
\end{lemma}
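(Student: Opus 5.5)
We verify condition~\eqref{eq:feas} for the reduced set directly. The approach is to bound the new maximum color count $d'$ against the new cardinality $t'$, using only $2d\le t+1$ and the definition of dominance. The one subtlety is that after a removal the dominant color may change. We therefore bound $d'$ by a case split on which color attains it, rather than assuming it is still the color of $i$.

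For part~(a), let $S'=S\setminus\{i\}$, so $t'=t-1$, and let $c=\kappa_i$ be the dominant color of $i$. The count of $c$ drops to $d-1$, and every other color keeps count at most $d$. Hence $d'\le d$, and we split into two cases.
\begin{itemize}[label=$\circ$]
\item If $d'=d$, some color $c'\neq c$ also had $k_{c'}(S)=d$. Then $t\ge 2d$, so $2d'=2d\le t=t'+1$.
\item If $d'\le d-1$, then $2d'\le 2d-2\le t-1\le t'+1$.
\end{itemize}
Either way $S'$ is color-feasible. The empty case $S'=\emptyset$ is trivially color-feasible.

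For part~(b), let $S''=S\setminus\{i,j\}$ with $t''=t-2$. Here $\kappa_i=c$ is dominant and $\kappa_j=c''$ is not dominant, so $c''\neq c$ and $k_{c''}(S)\le d-1$. In $S''$ we have $k_c(S'')=d-1$ and $k_{c''}(S'')\le d-2$, while every other color has count at most $d$. We split according to whether $d''=d$ or $d''\le d-1$.
\begin{itemize}[label=$\circ$]
\item If $d''=d$, there is a color $c'\notin\{c,c''\}$ with $k_{c'}(S)=d$. Then $S$ contains at least $d$ items of $c$, $d$ items of $c'$, and item $j$, so $t\ge 2d+1$. This gives $2d''=2d\le t-1=t''+1$.
\item If $d''\le d-1$, then $2d''\le 2d-2\le t-1=t''+1$.
\end{itemize}
So $S''$ is color-feasible.

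The only point needing care is the case where another color ties for dominance. There the feasibility slack must come from counting the extra items, namely a second full dominant class and, in~(b), the non-dominant item $j$. I expect this counting step to be the main obstacle, though it is elementary.
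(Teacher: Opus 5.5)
Your proof is correct and follows essentially the same argument as the paper. The paper splits on whether some other color ties with $\kappa_i$ for dominance, which is equivalent to your split on $d'=d$ versus $d'\le d-1$; it then uses the same counting in the tie case ($t\ge 2d$ in (a), $t\ge 2d+1$ in (b)) and the bound $2d-2\le t-1$ in the other case.
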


\begin{proof}

(a) Let $c=\kappa_i$, so that $k_c(S)=d$, and let $S'=S\setminus\{i\}$, with $|S'|=t-1$. If some color $c'\neq c$ also satisfies $k_{c'}(S)=d$, then
$t\ge 2d$ and $d(S')=d$, so that $2d(S')=2d\le t=|S'|+1$. Otherwise $c$ is the unique dominant color of $S$, every other color appears at most $d-1$ times, and $d(S')=d-1$. From feasibility of $S$, we have that $2d\le t+1$, hence $2d(S')=2d-2\le t-1 = |S'|$.

\medskip

\noindent (b) Let $S'=S\setminus\{i,j\}$, with $|S'|=t-2$, and let $c=\kappa_i$. If some color $c''\neq c$ satisfies $k_{c''}(S)=d$, then $S$ contains $d$ items
of color $c$, $d$ items of color $c''$ and the item $j$, whose color is not dominant by assumption, hence $t\ge2d+1$. Since $d(S')\le d$ we obtain $2d(S')\le 2d\le t-1=|S'|+1$. Otherwise $c$ is the unique dominant color of $S$, all colors appear at most $d-1$ times in $S'$, and $2d(S')\le2d-2\le t-1 =|S'|+1$, where the middle inequality follows from $2d\le t+1$.
\end{proof}

\noindent Part~(b) of Lemma~\ref{lem:removal} immediately bounds how negative a \emph{relevant} profit can be.

\begin{lemma}
\label{lem:verynegative} 
There exists an optimal solution of~\eqref{eq:ColKP} containing no item $i \in \items$ with $p_i\le -\pmax$. 
\end{lemma}

\begin{proof}
Let $S$ be an optimal solution minimizing the number of items with $p_i\le-\pmax$ and suppose by contradiction that $i\in S$ satisfies $p_i\le -\pmax$. Note that $S\neq\{i\}$, since $p(\{i\})\le-\pmax<0\le\OPT$.
If $\kappa_i \in \colors$ is a dominant color of $S$, then $S\setminus\{i\}$ is feasible by Part~(a) of Lemma~\ref{lem:removal} and $p(S\setminus\{i\})=p(S)-p_i\ge p(S)+\pmax>p(S)$, contradicting optimality. Otherwise, pick any item $j\in S$ of a dominant color of $S$; then $j\neq i$ and, by Part~(b) of Lemma~\ref{lem:removal}, $S'=S\setminus\{i,j\}$ is feasible, with $p(S')=p(S)-p_i-p_j\ge p(S)+\pmax-\pmax=p(S)$. Hence $S'$ is again optimal and contains
fewer items with profit at most $-\pmax$ than $S$, a contradiction.
\end{proof}

An immediate consequence of Lemma~\ref{lem:verynegative}, which will be often used in the following, is that all items $i \in \items$ such that $p_i\le -\pmax$ can be deleted in $O(n)$ time, and after this preprocessing every item satisfies
\begin{equation}
    -\pmax\;<\;p_i\;\le\;\pmax .
    \label{eq:profitrange}
\end{equation}
For this reason, we will assume~\eqref{eq:profitrange} to hold in the remainder of the paper.

\medskip

We now present the last result of this section, which explains when negative items can be used at all in an optimal \ColKP{} solution.

\begin{lemma}
\label{lem:tight}
Let $S$ be an optimal solution containing an item $i$ with $p_i<0$. Then $S$ has a unique dominant color, which is also critical, \ie{} $|S|=2d(S)-1$.
Furthermore, $S$ contains at most $d(S)-1\le(n-1)/2$ items of negative profit.
\end{lemma}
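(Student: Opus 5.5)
Let $d=d(S)$ and $t=|S|$. The plan is to derive everything from the removal operations of Lemma~\ref{lem:removal} together with the optimality of $S$: any color-feasible subset of $S$ still fits in the knapsack, so it cannot have larger profit.

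\textbf{Step 1: $\kappa_i$ is the unique dominant color and it is critical.} First, $\kappa_i$ must not be dominant. If it were, Lemma~\ref{lem:removal}(a) would make $S\setminus\{i\}$ feasible with profit $p(S)-p_i>p(S)$, contradicting optimality.

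Next, suppose $S$ had two distinct dominant colors $c\neq c'$. Removing one item $j$ of color $c$ leaves $c'$ with $d$ items, so $c'$ is dominant in $S\setminus\{j\}$. Also, $2d\le t$ holds because colors $c$ and $c'$ alone already contribute $2d$ items. Hence $S\setminus\{j\}$ has $t-1$ items and maximum multiplicity still $d$. To remove $i$ directly, note that in $S$ itself $\kappa_i$ is not dominant, yet removing $i$ alone keeps $d(S\setminus\{i\})=d$. Color feasibility then requires $2d\le t$, which we have. So $S\setminus\{i\}$ is feasible with strictly larger profit, a contradiction. Hence the dominant color $c$ is unique.

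Finally, suppose $c$ were not critical, i.e.\ $2d\le t$. Removing the non-dominant item $i$ again leaves $d(S\setminus\{i\})=d$ and $2d\le t=|S\setminus\{i\}|+1$. This is again a contradiction, so $2d=t+1$, i.e.\ $|S|=2d-1$.

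The key observation is simply that removing an item of a non-dominant color keeps $d$ unchanged and decreases $|S|$ by one. That is admissible exactly when $2d\le t$, which is the negation of criticality (uniqueness of the dominant color is in fact implied).

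\textbf{Step 2: counting negative items.} Since $|S|=2d-1$ and the critical color $c$ has $d$ items, the remaining colors account for $d-1$ items. Any negative-profit item lies outside $c$: Step~1 applies to every negative item, showing each one's color is not dominant. Therefore there are at most $d-1$ negative items.

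Finally, $2d-1=|S|\le n$ gives $d-1\le (n-1)/2$.

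The only delicate point is the uniqueness/criticality argument, which reduces to the single observation above. I do not expect any real obstacle beyond carefully stating why $d$ does not drop when a non-dominant item is removed.
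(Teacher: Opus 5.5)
Your proof is correct and follows essentially the same route as the paper. You use Lemma~\ref{lem:removal}(a) to show no negative item has a dominant color, then note that if $2d\le t$, removing the non-dominant item $i$ leaves $d$ unchanged and gives a feasible set of larger profit; this forces $|S|=2d-1$, which yields uniqueness and criticality at once, and the negative items are then counted among the $d-1$ items outside the critical color. Two cosmetic points: the heading of Step~1 should say that $S$ has a unique dominant color (you actually prove $\kappa_i$ is \emph{not} dominant), and the separate two-dominant-colors case, including the unused removal of $j$, is already covered by your final $2d\le t$ argument.
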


\begin{proof}
Let $t=|S|$, $d=d(S)$, and $\sigma=t+1-2d\ge0$. Suppose by contradiction that $\sigma\ge1$.
If $\kappa_i \in \colors$ is a dominant color, then $S\setminus\{i\}$ is color-feasible by Part~(a) of Lemma~\ref{lem:removal}. If it is not, then $d(S\setminus\{i\})=d$ and $2d\le t+1-\sigma\le t=|S\setminus\{i\}|+1$, so $S\setminus\{i\}$ is color-feasible as well. In both cases $w(S\setminus\{i\})\le w(S)\le b$ and $p(S\setminus\{i\})=p(S)-p_i>p(S)$, contradicting the optimality of $S$. Hence $\sigma=0$, \ie{} $t=2d-1$, meaning that the dominant color is critical and unique. Moreover, no items of the critical color can have negative profit, thus $S$ contains at most $t-d=d-1\le(n-1)/2$ items of negative profit.
\end{proof}

\section{A linear-time \texorpdfstring{$1/3$}{1/3}-approximation}
\label{sec:constant}

The scaling parameter of many FPTASs of knapsack type is $\scalefactor=\varepsilon\,\LB/n$, where $\LB$ is a lower bound on $\OPT$. After scaling the item profits, the size of the dynamic programming table is proportional to $\UB/\scalefactor$, where $\UB$ is an upper bound on $\OPT$. The running time of the scheme therefore depends on the ratio $\UB/\LB$. Taking $\LB=\pmax$ and $\UB=z_{LP}$ (the value of the LP relaxation) may be a natural choice, but it only guarantees $\UB/\LB=O(n)$. In this section we show that a constant ratio can be achieved in linear time, exploiting the results introduced in~\cite{CHM2026}.

We recall the LP relaxation of the \ColKP{},
\begin{equation}
\begin{aligned}
  (\mathrm{LP_{Col}})\qquad z_{LP}=\max\Bigl\{\, \sum_{i \in \items} p_i \, x_i : &\sum_{i \in \items} w_i \, x_i \le b,\\ &\sum_{i\in \items_c}x_i-\sum_{i\in \items\setminus \items_c}x_i\le1,~~ c\in\colors,\\&x_i \in [0,1],~~ i \in \items\Bigr\},
\label{eq:lp}  
\end{aligned}
\end{equation}
together with the two properties established in~\cite{CHM2026}: every basic feasible solution to~$\mathrm{LP_{Col}}$ has at most two fractional variables (because at most one color constraint can be active at a feasible point), and an optimal solution can be computed in $O(n)$ time (by first solving the \KP{} relaxation to identify a color that is critical in some optimal solution to~$\mathrm{LP_{Col}}$, and then applying the multidimensional search algorithm by \citet{Megiddo1984} to a resulting two-constraint linear program). Selecting an extreme-point optimum in the final reduced linear program preserves this running time, so the optimal solution used below may be assumed basic.

In Algorithm~\ref{alg:round}, we describe an LP-based primal algorithm, referred to as the \textsc{LP-Round} algorithm, which rounds down an optimal basic LP solution and repairs a possible violation of the color constraints by removing one item from the solution.

\begin{algorithm}[]
\caption{The \textsc{LP-Round} algorithm for the \ColKP{}}
\label{alg:round}
\begin{algorithmic}[1]
\Require a \ColKP{} instance
\Ensure a feasible \ColKP{} solution $\hat S\subseteq\items$
\State $\pmax\leftarrow\max \{p_i:i\in\items\}$
\State compute an optimal basic solution $\{x_1^\ast, x_2^\ast, \dots, x_n^\ast \}$ to $\mathrm{LP_{Col}}$ 
\State $S_0\leftarrow\{i\in\items:x^\ast_i=1\}$
\If{$S_0$ violates~\eqref{eq:feas}}
    \State let $c$ be the unique dominant color and let
           $j\in\argmin\{p_i:i\in S_0\cap\items_c\}$
    \State $S_0\leftarrow S_0\setminus\{j\}$ \Comment{repair}
\EndIf
\State let $r\in\argmax \{p_i:i\in\items\}$
\State $\hat S\in\argmax\{p(S):S\in\{S_0,\{r\}\}\}$
\State \textbf{return} $\hat S$
\end{algorithmic}
\end{algorithm}

\begin{theorem}\label{thm:third}
Algorithm~\ref{alg:round} returns a feasible \ColKP{} solution $\hat{S}$ with
\[
    p(\hat{S})\;\ge\;\tfrac13\,z_{LP}\;\ge\;\tfrac13\,\OPT ,
\]
and runs in $O(n)$ time.
\end{theorem}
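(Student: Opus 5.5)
The plan is to establish the chain $z_{LP}\le p(S_0)+2\pmax$, where $S_0$ is the set after the repair step, and then to use $\max\{p(S_0),\pmax\}\ge\tfrac13\bigl(p(S_0)+2\pmax\bigr)\ge\tfrac13 z_{LP}$. The inequality $z_{LP}\ge\OPT$ holds because $\mathrm{LP_{Col}}$ relaxes the natural formulation. The running time is immediate: the LP is solved in $O(n)$ by the result of~\cite{CHM2026} recalled above, and the remaining steps (building $S_0$, computing $d(S_0)$ by counting colors, the repair, and the comparison with $\{r\}$) are single linear scans.

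For feasibility, let $F$ be the set of fractional variables, with $|F|\le2$, and let $S_0$ denote the rounded set before any repair. Rounding down keeps $w(S_0)\le b$. For each color $c$, the $c$-th LP constraint gives
\[
k_c(S_0)-|S_0\setminus\items_c|\le 1+\sum_{i\in F\setminus\items_c}x_i^\ast-\sum_{i\in F\cap\items_c}x_i^\ast<3 .
\]
By integrality, $k_c(S_0)-|S_0\setminus\items_c|\le 2$. So if~\eqref{eq:feas} fails, the violating color $c$ is unique and has excess exactly $2$. Removing one item $j$ of color $c$ lowers the excess to $1$, and no other color becomes violating: any other color $c'$ has excess at most $k_{c'}-k_c\le -(|S_0|-2k_c)\cdots<0$ before the removal, so it gains at most one and stays feasible. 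The repaired set is therefore feasible, and $\hat S$ is feasible by Observation~\ref{obs:singleton}. The repair also yields structure: excess $2$ forces $\sum_{i\in F\setminus\items_c}x_i^\ast-\sum_{i\in F\cap\items_c}x_i^\ast\ge1$. Hence both fractional items have colors different from $c$ and satisfy $x_a^\ast+x_b^\ast\ge1$. Moreover $d=k_c(S_0)\ge2$.

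For the profit bound, $z_{LP}=p(S_0)+\sum_{i\in F}p_ix_i^\ast$. In the no-repair case each fractional term is at most $\pmax x_i^\ast<\pmax$, so $z_{LP}\le p(S_0)+2\pmax$, as required. In the repair case I must show $p_j+p_ax_a^\ast+p_bx_b^\ast\le2\pmax$. This is the step I expect to be the main obstacle, since the crude bound only gives $3\pmax$ and hence a ratio of $1/4$. If $p_j\le0$ the bound is trivial. Otherwise I would use LP optimality through a feasible perturbation of $x^\ast$. One option is to decrease $x_j$ and increase a fractional $x_a$, which frees slack in the tight color-$c$ constraint. Another is to decrease $x_j$ together with $x_a$ and $x_b$ while preserving the color constraints. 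Either should give a comparison such as $p_j\le p_a$ or $p_j\le\pmax(1-x_a^\ast)+\ldots$ that absorbs one $\pmax$. A fallback is to exploit $d\ge2$ and the choice of $j$ as the cheapest color-$c$ item: if the perturbations fail, I would compare against $\{j',r\}$-type alternatives or strengthen the final comparison. The feasibility part and the no-repair case are routine; all the difficulty is concentrated in this single inequality.
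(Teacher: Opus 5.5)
Your feasibility argument, the no-repair case, and the structural consequence of a repair ($F=\{a,b\}$ with $a,b\notin\items_c$ and $x_a^\ast+x_b^\ast\ge1$) are correct and match the paper. The gap is the step you flag yourself: you never prove $p_j+p_ax_a^\ast+p_bx_b^\ast\le2\pmax$, and the routes you sketch cannot prove it. Since $x_j^\ast=1$ sits at its upper bound, an optimality argument using a perturbation that decreases $x_j$ only gives inequalities of the form $p_j\ge\cdots$. These are lower bounds on $p_j$, not the upper bound you need. Comparing with sets such as $\{j',r\}$, or strengthening the final comparison, changes Algorithm~\ref{alg:round}, and the theorem is about that algorithm.

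In fact, no argument using only LP optimality, $|F|\le2$, $d\ge2$ and the choice of $j$ can work. Take items $1,2$ of color $1$ with $w=1$, $p=10$, items $3$ and $4$ of colors $2$ and $3$ with $w=10$, $p=10$, and capacity $21$. Then $x^\ast=(1,1,0.95,0.95)$ attains the value $39$ of the LP relaxation without color constraints and satisfies all color constraints. So it is optimal for $\mathrm{LP_{Col}}$ and has two fractional variables. Rounding gives $S_0=\{1,2\}$, and the repair leaves profit $10$. Since $\pmax=10<13=z_{LP}/3$, the bound fails, and indeed $p_j+p_3x_3^\ast+p_4x_4^\ast=29>2\pmax$.

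The missing idea is that $x^\ast$ is \emph{basic}; your proposal uses this only through $|F|\le2$. In the repair case $|S_0|=2d-2$ and $k_h(S_0)\le d-2$ for every color $h\neq c$. So the left-hand side of the color-$h$ constraint is at most $-2+\sum_{i\in F\cap\items_h}x_i^\ast<0$, and all those constraints are slack. The $n-2$ integral variables give $n-2$ active bound constraints. An extreme point therefore needs two more linearly independent active constraints, and the only candidates are the capacity constraint and the color-$c$ constraint. Tightness of the color-$c$ constraint reads $2-x_a^\ast-x_b^\ast=1$, i.e.\ $x_a^\ast+x_b^\ast=1$. Hence $p_ax_a^\ast+p_bx_b^\ast\le\pmax$, so $p(S_0)\ge z_{LP}-\pmax$ before the repair. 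Removing $j$ costs at most one more $\pmax$, giving $z_{LP}-2\pmax$. In the example above only three constraints are active, so $x^\ast$ is not a vertex, which is exactly why the bound fails there.
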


\begin{proof}
Let $\{x_1^\ast, x_2^\ast, \dots, x_n^\ast \} \in \mathbb{Q}^n$, be an optimal basic solution to~$\mathrm{LP_{Col}}$, let
$F=\{i\in\items:0<x_i^\ast<1\}$, and put $S_0=\{i\in\items:x_i^\ast=1\}$. By~\cite{CHM2026}, $|F|\le2$, and therefore $\sum_{i\in F}x_i^\ast < 2$. Moreover,
$w(S_0)\le\sum_{i\in\items}w_i \, x_i^\ast\le b$, and
\begin{equation}
    p(S_0)
    =z_{LP}-\sum_{i\in F}p_i \, x_i^\ast
    \ge z_{LP}-2\pmax .
    \label{eq:roundloss}
\end{equation}
For every $c\in\colors$, we have that
\[
k_c(S_0) = \sum_{i \in \items_c} x_i^\ast - \sum_{i \in F \cap \items_c} x_i^\ast \qquad \text{and}\qquad |S_0 \setminus \items_c| = \sum_{i \in \items \setminus \items_c} x_i^\ast - \sum_{i \in F \setminus \items_c} x_i^\ast \, .
\]
It follows that
\begin{align*}
    k_c(S_0)-|S_0\setminus\items_c|
    &=\underbrace{\left(\sum_{i\in\items_c}x_i^\ast
      -\sum_{i\in\items\setminus\items_c}x_i^\ast\right)}_{\le 1 \text{ by feasibility of the LP solution}}
      -\sum_{i\in F\cap\items_c}x_i^\ast
      +\sum_{i\in F\setminus\items_c}x_i^\ast \\[1.5ex]
      & \le  \, 1 + \sum_{i \in F \setminus \items_c} x_i^\ast \, < \, 3 \, .
\end{align*}
Since the left-hand side is an integer, we have that $k_c(S_0)-|S_0\setminus\items_c|\le2$ for every color $c\in\colors$, thus $S_0$ violates each color constraint by at most one unit.

If $S_0$ is color-feasible, no repair is needed and~\eqref{eq:roundloss} applies directly. Suppose otherwise, and let $c\in\colors$ be the violated color. Such a color is unique: if two distinct colors were violated, each would occur at least $(|S_0|+2)/2$ times, which is impossible. Letting $d=k_c(S_0)$, we therefore have
\[
    |S_0\setminus\items_c|=d-2
    \quad\implies\quad
    |S_0|=2d-2 \, .
\]
Removing any item $j\in S_0\cap\items_c$ restores color feasibility. Indeed, the resulting set has $2d-3$ items, color $c$ occurs $d-1$ times, and every other color occurs at most $d-2$ times. Hence the color constraint associated with color $c$ is tight, since $2(d-1)=(2d-3)+1$, and all other color constraints are satisfied as well.

We now bound the amount of profit lost when this repair is needed. We first determine the structure of the fractional variables. The contribution of the integral variables to the color constraint of $c$ is
\[
    k_c(S_0)-|S_0\setminus\items_c|=d-(d-2)=2.
\]
Separating the integral and fractional variables in that constraint, and by feasibility of the LP solution, we obtain
\[
    2+\sum_{i\in F\cap\items_c}x_i^\ast
      -\sum_{i\in F\setminus\items_c}x_i^\ast\le1,
\]
or, equivalently,
\begin{equation}
    \sum_{i\in F\setminus\items_c}x_i^\ast
      -\sum_{i\in F\cap\items_c}x_i^\ast\ge1.
    \label{eq:fractionalbalance}
\end{equation}
Recall that $|F|\le2$ and that $0<x_i^\ast<1$ for every $i\in F$. If $|F|\le1$, the left-hand side of~\eqref{eq:fractionalbalance} is strictly
smaller than one. The same is true if $|F|=2$ and at least one of the two items belongs to $\items_c$: in that case at most one positive term appears
in the first sum, and that term is strictly smaller than one. Therefore
\[
    F=\{a,b\},\quad a,b\notin\items_c,
    \qquad\text{and}\qquad x_a^\ast+x_b^\ast\ge1.
\]

We now show that the color constraint of $c$ is active for the LP solution. To this end, consider any color $h\neq c$. Since all items of color $h$ in $S_0$ belong to $S_0\setminus\items_c$, we have $k_h(S_0)\le d-2$. Moreover, $|S_0|=2d-2$. We write the left-hand side of the color constraint of $h$ at the LP solution by separating the integral and fractional variables:
\begin{align*}
    &\sum_{i\in\items_h}x_i^\ast
      -\sum_{i\in\items\setminus\items_h}x_i^\ast \\
    &\quad=\left(k_h(S_0)+\sum_{i\in F\cap\items_h}x_i^\ast\right)
      -\left(|S_0\setminus\items_h|
      +\sum_{i\in F\setminus\items_h}x_i^\ast\right) \\
    &\quad=k_h(S_0)-\biggl(|S_0|-k_h(S_0)\biggr)
      +\sum_{i\in F\cap\items_h}x_i^\ast
      -\sum_{i\in F\setminus\items_h}x_i^\ast \\
    &\quad=2 \, k_h(S_0)-|S_0|
      +\sum_{i\in F\cap\items_h}x_i^\ast
      -\sum_{i\in F\setminus\items_h}x_i^\ast \\
    &\quad\le 2(d-2)-(2d-2)
      +\sum_{i\in F\cap\items_h}x_i^\ast
      -\sum_{i\in F\setminus\items_h}x_i^\ast \\
    &\quad\le-2+\sum_{i\in F\cap\items_h}x_i^\ast<0 \, .
\end{align*}
Hence, every color constraint other than that of $c$ is slack.

We further observe that the $n-2$ integral variables $x_i^\ast$, $i \in \items \setminus F$, are at one of their bounds and yield $n-2$ linearly independent active bound constraints. Since the LP solution is basic, there must exist two additional linearly independent active constraints. These are the capacity constraint and the color constraint associated with $c$, which can be the only tight color constraint. Since $a,b\notin\items_c$, its left-hand side is $2-x_a^\ast-x_b^\ast$, and hence
\[
    2-x_a^\ast-x_b^\ast=1,
    \quad \implies \quad x_a^\ast+x_b^\ast=1.
\]
We can now bound the profit loss. We have that
\[
    p_a \, x_a^\ast+p_b \, x_b^\ast
      \le\pmax \, (x_a^\ast+x_b^\ast)=\pmax.
\]
Therefore
\[
    p(S_0)=z_{LP}-p_a \, x_a^\ast-p_b \,x_b^\ast
      \ge z_{LP}-\pmax.
\]
The repair removes an item $j\in S_0\cap\items_c$. Since $p_j\le\pmax$, the
repaired set satisfies
\[
    p(S_0\setminus\{j\})=p(S_0)-p_j
      \ge z_{LP}-2\pmax.
\]
If no repair is needed, the same lower bound follows directly from~\eqref{eq:roundloss}. Thus, in either case, the set obtained by rounding the LP solution, and possibly repairing the solution obtained, has profit at least $z_{LP}-2 \, \pmax$.

Finally, we notice that Algorithm~\ref{alg:round} compares the solution obtained with a singleton of profit $\pmax$. If $\pmax\ge z_{LP}/3$, the singleton has profit at least $z_{LP}/3$. If $\pmax<z_{LP}/3$, the rounded and possibly repaired set has profit
\[
    z_{LP}-2 \, \pmax
      >z_{LP}-\tfrac23 \, z_{LP}=\tfrac13 \, z_{LP}.
\]
In both cases $p(\hat S)\ge z_{LP}/3$. Since~$\mathrm{LP_{Col}}$ is a relaxation, $z_{LP}\ge\OPT$, and the claimed approximation guarantee follows.

All operations other than solving~$\mathrm{LP_{Col}}$ take $O(n)$ time. Moreover, an optimal basic solution to~$\mathrm{LP_{Col}}$ can also be computed in $O(n)$ time (as shown in~\cite{CHM2026}), which proves the running time claim.
\end{proof}

\begin{corollary}\label{cor:roundtight}
The worst-case performance ratio of \textsc{LP-Round} is exactly $1/3$.
\end{corollary}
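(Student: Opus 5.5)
The plan is to pair the lower bound from Theorem~\ref{thm:third} with a family of instances on which \textsc{LP-Round} returns a value arbitrarily close to $\OPT/3$. Theorem~\ref{thm:third} already gives $R_{\textsc{LP-Round}}\ge 1/3$, so only the matching upper bound remains. The natural place to look is the repair branch of the proof. There, rounding and repairing can each lose about $\pmax$, while $\OPT$ is about $3\pmax$ and the singleton comparison earns only $\pmax$. The bad instance should therefore put the algorithm exactly in that branch: $S_0$ consists of two items of a single color $c$ (so $d=2$, $|S_0|=2$, violating~\eqref{eq:feas}), and the LP spreads one unit over two fractional items $a,b$ of other colors.

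Concretely, fix a large integer $K$ and a small $\eta>0$ (profits scaled to be integer, e.g.\ $P$ large and $\eta=1$). Take four items:
\begin{itemize}[label=--]
\item items $1,2$ of color $1$ with $w_1=w_2=1$ and $p_1=p_2=P$;
\item item $a$ of color $2$ with $w_a=K+2$ and $p_a=P+\eta$;
\item item $b$ of color $3$ with $w_b=1$ and $p_b=P$;
\end{itemize}
and set the capacity to $K+2$, so that $w_i\le b$ holds. The set $\{1,b,2\}$ is feasible: its weight is $3$ and it is orderable as $1,b,2$. Hence $\OPT\ge 3P$. The intended LP optimum has $x_1^\ast=x_2^\ast=1$. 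The color-$1$ constraint then forces $x_a^\ast+x_b^\ast\ge 1$, and since $a$ has the better profit, the capacity constraint $2+(K+2)x_a^\ast+x_b^\ast\le K+2$ becomes tight with $x_a^\ast+x_b^\ast=1$. This gives $x_a^\ast=(K-1)/(K+1)$, $x_b^\ast=2/(K+1)$, and $z_{LP}=3P+\eta(K-1)/(K+1)$.

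With this solution, $S_0=\{1,2\}$ violates~\eqref{eq:feas}, so the repair step removes one item of color $1$ and leaves profit $P$. The best singleton is $\{a\}$, with profit $P+\eta$. The algorithm therefore returns $P+\eta$ against $\OPT\ge 3P$, and the ratio is at most $(P+\eta)/(3P)$, which tends to $1/3$ as $\eta/P\to0$. Together with Theorem~\ref{thm:third}, this shows the infimum is exactly $1/3$.

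The main obstacle is certifying that the basic solution described above is the unique LP optimum, so that no choice of optimal basic solution lets the algorithm escape. I would check this by LP duality. I would exhibit duals $\mu\ge0$ for the capacity constraint and $\nu\ge0$ for the color-$1$ constraint, set all other duals to zero, and choose them so that the reduced costs of $a$ and $b$ vanish: $p_a=\mu(K+2)-\nu$ and $p_b=\mu-\nu$. This gives $\mu=\eta/(K+1)$ and $\nu=\eta/(K+1)-P$. That $\nu$ is negative, which signals that the color-$1$ constraint must be handled with the opposite sign convention for $a,b$. Indeed, those items enter the color-$1$ constraint with coefficient $-1$, so the conditions are actually $p_a=\mu(K+2)+\nu$ and $p_b=\mu+\nu$, giving $\nu=P-\eta/(K+1)>0$.

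I would then verify two things. First, items $1,2$ have strictly positive reduced cost: $P-\mu-\nu>0$ needs care and may require giving items $1,2$ slightly larger profit $P+\eta'$. Second, complementary slackness holds strictly for the slack color constraints of colors $2$ and $3$. If strict complementarity fails, I would perturb the weights (e.g.\ $w_b=2$) or the profits slightly, to make the optimum unique without changing the limiting ratio.
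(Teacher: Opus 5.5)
Your construction fails at its central step: the point $x_1^\ast=x_2^\ast=1$, $x_a^\ast=(K-1)/(K+1)$, $x_b^\ast=2/(K+1)$ is not an optimum of $\mathrm{LP_{Col}}$.

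The color-$1$ constraint only requires $x_a+x_b\ge1$, and nothing forces equality. From your point, increase $x_b$ by $\delta$ and decrease $x_a$ by $\delta/(K+2)$. The capacity stays tight, the color-$1$ constraint becomes slack, and the objective grows by $\delta\bigl(P-(P+\eta)/(K+2)\bigr)>0$. What matters to the LP is profit per unit of weight, not profit. Item $b$ (weight $1$, profit $P$) is far more efficient than $a$.

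For $K\ge2$ the actual, unique LP optimum is $x_1^\ast=x_2^\ast=x_b^\ast=1$, $x_a^\ast=(K-1)/(K+2)$. Rounding gives $S_0=\{1,2,b\}$, which is color-feasible with weight $3$ and profit $3P$. Every set containing $a$ other than $\{a\}$ exceeds the capacity, so $\OPT=3P$ and \textsc{LP-Round} is exact on your family.

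Your dual computation had already detected this. Items $a,b$ have coefficient $-1$ in the color-$1$ row, so the reduced-cost conditions are $p_a=\mu(K+2)-\nu$ and $p_b=\mu-\nu$, exactly as you first wrote them. The only other constraints that could carry nonzero duals are slack at your point. So $\nu=\eta/(K+1)-P<0$ certifies that your point is not optimal. Re-signing the conditions to make $\nu$ positive is not legitimate. The perturbations you suggest ($w_b=2$, slightly larger $p_1,p_2$) leave $b$ much more efficient than $a$, so they do not help either.

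The repair-branch idea itself can be made to work. The items used by the near-$3P$ integer solution must be unattractive to the LP at the margin. By complementary slackness you need $\mu,\nu\ge0$ with $p_a=\mu w_a-\nu$, $p_b=\mu w_b-\nu$, and $p_i\ge\mu w_i+\nu$ for the two items of the violated color.

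For instance, with $P>3$:
\begin{itemize}
\item items $1,2$ of color $1$ with $w=1$, $p=P$;
\item item $a$ of color $2$ with $w_a=P$, $p_a=P-2$;
\item item $b$ of color $3$ with $w_b=P+2$, $p_b=P$;
\item capacity $P+3$.
\end{itemize}
The duals $\mu=1$, $\nu=2$, $\pi_1=\pi_2=P-3$ show that $x_1=x_2=1$, $x_a=x_b=1/2$ is the unique LP optimum, with value $3P-1$. The repair leaves profit $P$ and the best singleton has profit $P$. Meanwhile $\{1,a,2\}$ is feasible with profit $3P-2$, so the ratio tends to $1/3$.

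The paper instead avoids the repair branch entirely. It uses a two-color instance in which $S_0=\{1\}$ is already feasible, and the loss comes from two fractional variables equal to $1-1/(3\ell)$. Optimality and uniqueness of the LP solution are proved there by two elementary upper bounds on the objective, one from the capacity row and one from the color row, rather than by duality.
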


\begin{proof}
Theorem~\ref{thm:third} shows that the ratio is at least $1/3$. To prove that this bound is tight, let $\ell\ge2$ be an integer and consider the following
instance with two colors and four items:
\[
{
\begin{array}{cccccc}
\toprule
& i=1 & i=2 & i=3 & i=4 \\
\midrule
w_i & 1 & \ell & \ell-1 & 2\ell \\
p_i & \ell & \ell & \ell-1 & \ell \\
\kappa_i & 1 & 1 & 1 & 2 \\
\bottomrule
\end{array}
}
\qquad\text{and}\qquad b=3\ell.
\]
Set $t=1-1/(3\ell)$. We first show that
\[
    x_1^\ast=1,\qquad x_2^\ast=t,\qquad x_3^\ast=0,\qquad x_4^\ast=t
\]
is an optimal solution to $\mathrm{LP}_{\mathrm{Col}}$ for this instance. This solution is feasible: we have $0<t<1$, its weight is $1+3\ell t=3\ell$, and the color constraint of color~$1$ is tight, since $1+t-t=1$. The color constraint of color~$2$ is respected as well, since $t-(1+t)=-1$.

We prove optimality by deriving two upper bounds on the profit of an arbitrary feasible choice of values $x_i$, $i\in\items$. First, comparing its profit with its weight gives
\begin{align}
    \sum_{i\in\items}p_i \, x_i
    &=\sum_{i\in\items}w_i \, x_i+(\ell-1)x_1-\ell x_4 \notag\\
    &\le 3\ell+(\ell-1)-\ell x_4
     =4\ell-1-\ell x_4 .
    \label{eq:tight-cap-bound}
\end{align}
The equality follows by substituting the weights and profits of the four items. The inequality follows from the capacity constraint $\sum_{i\in\items}w_i \, x_i\le3\ell$ and from $x_1\le1$.

A second upper bound is obtained by rewriting the profit of this solution in terms of the left-hand side of the color constraint associated with color~$1$, thus obtaining
\begin{align}
    \sum_{i\in\items}p_i \, x_i
    &=\ell(x_1+x_2+x_3-x_4)-x_3+2\ell x_4 \notag\\
    &\le \ell+2\ell x_4 .
    \label{eq:tight-color-bound}
\end{align}
Here the inequality follows from the color constraint $x_1+x_2+x_3-x_4\le1$ and from $x_3\ge0$.

The bound in~\eqref{eq:tight-cap-bound} decreases with $x_4$, whereas the bound in~\eqref{eq:tight-color-bound} increases with $x_4$. They coincide at
$x_4=t$: indeed,
\[
    4\ell-1-\ell t=\ell+2\ell t=3\ell-\frac23.
\]
Therefore, if $x_4\le t$,~\eqref{eq:tight-color-bound} gives $\sum_{i\in\items}p_i \, x_i\le3\ell-2/3$; if $x_4\ge t$, the same conclusion follows from~\eqref{eq:tight-cap-bound}. Since $\sum_{i\in\items}p_i \, x_i^\ast=\ell(1+2t)=3\ell-2/3$, the proposed solution is optimal.
We now show that the proposed solution is the unique optimal solution as well. As observed above, a solution of profit $3\ell-2/3$ must have $x_4=t$. Equality in~\eqref{eq:tight-cap-bound} then requires $\sum_{i\in\items}w_i \, x_i=3\ell$ and $x_1=1$, while equality in~\eqref{eq:tight-color-bound} requires $x_3=0$ and $x_1+x_2+x_3-x_4=1$. It follows that $x_2=x_4=t$. Hence $x_i=x_i^\ast$ for all $i\in\items$, proving uniqueness. We also observe that the optimal solution is basic.

Rounding down the optimal solution gives $S_0=\{1\}$, which is already color-feasible, so no repair is performed. Since $\pmax=\ell$, \textsc{LP-Round} returns a solution $\hat S_\ell$ of profit $\ell$.

On the other hand, $S^\ast_\ell=\{1,3,4\}$ is color-feasible, has weight $3\ell$, and has profit $3\ell-1$. This solution is optimal. Indeed, every set with at most two items has profit at most $2 \, \ell$, while a color-feasible solution with three items must contain item~$4$ and two items of color~$1$. The other two such solutions apart from $S^\ast_\ell$, $\{1,2,4\}$ and $\{2,3,4\}$, have weight $3\ell+1$ and $4\ell-1$, respectively, and are therefore infeasible. Finally, the set of all four items violates both the capacity constraint and the color constraint of color~$1$. Therefore
\[
    \frac{p(\hat S_\ell)}{p(S^\ast_\ell)}
      =\frac{\ell}{3\ell-1}
      \xrightarrow[\ell\to\infty]{}\frac13.
\]
Together with Theorem~\ref{thm:third}, this proves the claim.
\end{proof}

We now run Algorithm~\ref{alg:round} on the instance of Example~\ref{ex:running}. An optimal solution to the \KP{} relaxation assigns $x_1=0$, $x_2=1$, $x_3=1$, $x_4=0$, and $x_5=\tfrac37$, and has value $141/7$. The color constraint of color~$1$ reads $2-\tfrac37=\tfrac{11}{7}>1$ and is violated, so by Lemma~2 of~\cite{CHM2026} color~$1$ is critical in some optimal solution to~$\mathrm{LP_{Col}}$. Solving the resulting linear program, obtained by removing all the color constraints except the one for color~$1$, gives
\[
    x_1^\ast=0,\quad x_2^\ast=1,\quad x_3^\ast=1,\quad x_4^\ast=\tfrac23,\quad x_5^\ast=\tfrac13,
    \qquad \text{and} \qquad
    z_{LP}=\frac{59}{3} \, .
\]
The two fractional variables are associated with items $4$ and $5$, both of color $ \kappa_4 = \kappa_5 = 2$, which illustrates that two fractional variables of an optimal basic solution may share a (non-critical) color.
Rounding down gives $S_0=\{2,3\}$, two items of color~$1$ and none of color $2$: the color constraint is violated by exactly one unit, since $k_1(S_0)=2=|S_0\setminus\items_1|+2$. The repair removes the least profitable item of color~$1$ in $S_0$, namely item~$2$, leaving $\{3\}$ of profit $11$, which is feasible. The best singleton is item~$5$, of profit $\pmax=12$, and the algorithm returns it, thus $\LB=12$. This is a lower bound on the optimal value $\OPT=17$ of the instance, which is attained by the feasible solution $\{2,4,5\}$, and it also shows that the ratio $\LB/\OPT$ can be, in practice, much better than the worst-case guarantee of $1/3$.

\medskip

\noindent From now on we denote by
\begin{equation}
    \LB\;=\;p(\hat{S})\qquad\text{and}\qquad \UB\;=\;3 \cdot \LB
    \label{eq:bounds}
\end{equation}
the lower and upper bounds provided by Theorem~\ref{thm:third}, so that
$\pmax\le\LB\le\OPT\le\UB$.

\section{A first profit-indexed DP algorithm}
\label{sec:profitdp}

In this section we describe the exact DP algorithm on which the first approximation scheme proposed in this paper is built.
It coincides with the item-by-item dynamic program $\firstDP$ of~\cite{CHM2026} with the roles of the capacity and of the objective exchanged.

Assume the items are sorted by color, so that each color class occupies a contiguous block of indices. The state of $\firstDP$ after processing the first $i$ items records the number $t$ of selected items, the number $d$ of selected items of a dominant color, the number $a$ of selected items of the current color $\kappa_i \in \colors$, and the consumed capacity $q$. Its value is the maximum profit. The key observation is that the color feasibility of a partial solution $S\subseteq\{1, 2, \dots,i\}$, and of every extension of $S$ by items of $\{i+1, i+2, \dots,n\}$, is completely determined by the triple $(t,d,a)=(|S|, \, d(S), \, k_{\kappa_i}(S))$ and by the items still to be processed. It does not depend on $w(S)$ nor on $p(S)$. Hence, for two partial solutions with the same $(i,t,d,a)$ and the same profit, the one of smaller weight can never be worse. This dominance property makes the classical exchange of dimensions legitimate. We observe that this is not automatic for KP variants with combinatorial side constraints: it would fail, for instance, under a classwise lower bound on the total selected weight, as in one variant of the KP with group fairness~\cite{Malaguti2025}, since then a lighter partial solution would not dominate a heavier one.

\medskip

\noindent For $i\in\{1,2,\dots,n\}$, $t,d,a\in\{0,1,\dots,n\}$, and an integer profit value $z$, define
\begin{equation}
    g(i,t,d,a,z)=\min\Bigl\{w(S)\;:\;S\subseteq\{1,2,\dots,i\},\;|S|=t,\;
    d(S)=d,\;k_{\kappa_i}(S)=a,\;p(S)=z\Bigr\},
    \label{eq:gdef}
\end{equation}
with $g=+\infty$ when no such subset exists. Let $\sigma(a,d)=1$ if $a=d$ and $\sigma(a,d)=0$ otherwise. At stage $i=0$, we use the initialization convention $g(0,0,0,0,0)=0$, with all other states having value equal to $+\infty$. Algorithm~\ref{alg:firstDP-fptas} computes $g$ by the two usual transitions (pack / do not pack item $i$), resetting the current-color counter at every
color change.

\begin{algorithm}[]
\caption{\textsc{$\firstDPp$}: profit-indexed dynamic program}
\label{alg:firstDP-fptas}
\begin{algorithmic}[1]
\Require a \ColKP{} instance, with items indexed by nondecreasing color index
\Ensure an optimal \ColKP{} solution $S^\star\subseteq\items$ and its value
        $p(S^\star)=\OPT$
\State $p_{\mathrm{abs}}\leftarrow\sum_{i\in\items}|p_i|$; \quad
       $\mathcal Z\leftarrow\{-p_{\mathrm{abs}},-p_{\mathrm{abs}}+1, \dots,p_{\mathrm{abs}}\}$
\State $g(i,t,d,a,z)\leftarrow+\infty$ for all
       $(i,t,d,a,z)\in\{0,1,\dots,n\}^4\times\mathcal Z$
\State  $g(0,0,0,0,0)\leftarrow0$
\For{$i=1,\dots,n$}
  \For{all $(t,d,a,z)$ with $g(i-1,t,d,a,z)<+\infty$}
     \State $a'\leftarrow a$ \textbf{if} $i>1$ and $\kappa_i=\kappa_{i-1}$,
            \textbf{else} $a'\leftarrow0$
     \State $g(i,t,d,a',z)\leftarrow\min\{g(i,t,d,a',z),\,g(i-1,t,d,a,z)\}$
            \Comment{do not pack $i$}
     \State $d'\leftarrow d+\sigma(a',d)$
     \State $g(i,t+1,d',a'+1,z+p_i)\leftarrow
            \min\{g(i,t+1,d',a'+1,z+p_i),\,g(i-1,t,d,a,z)+w_i\}$
            \Comment{pack $i$}
  \EndFor
\EndFor
\State $z^\star\leftarrow\max\{z\in\mathcal Z\;:\;\exists\,(t,d,a)\ \text{ with }
       2d\le t+1\ \text{and}\ g(n,t,d,a,z)\le b\}$
\State choose $(t^\star,d^\star,a^\star)$ with $2d^\star\le t^\star+1$ and
       $g(n,t^\star,d^\star,a^\star,z^\star)\le b$
\State recover $S^\star$ by backtracking the transitions leading to $g(n,t^\star,d^\star,a^\star,z^\star)$
\State \textbf{return} $S^\star$ and $z^\star$
\end{algorithmic}
\end{algorithm}

\begin{proposition}\label{prop:dpcorrect}
Algorithm~\ref{alg:firstDP-fptas} computes $g(i,t,d,a,z)$ as defined in~\eqref{eq:gdef} for every state, and returns an optimal \ColKP{} solution.
\end{proposition}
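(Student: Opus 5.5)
The plan is to prove the first claim by induction on $i$, and then read off optimality from the final stage.

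For every stage $i$, let $\mathcal S(i,t,d,a,z)$ denote the family of subsets appearing in~\eqref{eq:gdef}. The base case $i=0$ holds by the initialization convention: only $S=\emptyset$ exists, with $(t,d,a,z)=(0,0,0,0)$ and weight $0$.

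The core of the induction is a bijection argument. Fix $i\ge1$ and a subset $S\subseteq\{1,\dots,i\}$. Put $S^-=S\setminus\{i\}$, and let $(t,d,a,z)$ be the signature of $S^-$ at stage $i-1$, that is, $a=k_{\kappa_{i-1}}(S^-)$. I will check that the signature of $S$ at stage $i$ is exactly the one produced by the corresponding transition.

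First compute $k_{\kappa_i}(S^-)$. If $i>1$ and $\kappa_i=\kappa_{i-1}$, it equals $a$. Otherwise, since items are sorted by color and $\kappa_{i-1}\ne\kappa_i$, no item of color $\kappa_i$ lies in $\{1,\dots,i-1\}$, so it equals $0$. In both cases it equals $a'$.

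If $i\notin S$, the signature of $S$ is $(t,d,a',z)$, matching the do-not-pack transition.

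If $i\in S$, the size becomes $t+1$, the profit $z+p_i$, and the current-color count $a'+1$. The dominant count increases by one exactly when $a'=d$. This is because $a'\le d$, and after adding item $i$ the color $\kappa_i$ has $a'+1$ items while all other counts are unchanged. So $d(S)=\max\{d,a'+1\}=d+\sigma(a',d)$, matching the pack transition.

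Conversely, every transition from a finite state arises from some $S^-$ in the predecessor family. Consequently, the family at stage $i$ for a given state is the disjoint union of its two preimages:
\begin{itemize}
\item the sets $S^-$ from all predecessor states mapped to it without packing;
\item the sets $S^-\cup\{i\}$ from all predecessor states mapped to it by packing.
\end{itemize}
Taking the minimum weight over this union gives exactly the updates in the algorithm. This uses the induction hypothesis for $g(i-1,\cdot)$ and the fact that the pack transition adds $w_i$.

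One detail needs checking: the profit range $\mathcal Z$ must contain every reachable value. This holds because $|p(S)|\le p_{\mathrm{abs}}$ for every $S$.

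For optimality, any subset $S\subseteq\items$ has some signature $(n,t,d,a,z)$ with $t=|S|$ and $d=d(S)$. $S$ is feasible if and only if $2d\le t+1$ and $w(S)\le b$. By the first part, a feasible set of profit $z$ exists if and only if there is a triple $(t,d,a)$ with $2d\le t+1$ and $g(n,t,d,a,z)\le b$. Hence $z^\star=\OPT$; the maximum exists since $\emptyset$ is feasible by Observation~\ref{obs:singleton}.

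Backtracking from $g(n,t^\star,d^\star,a^\star,z^\star)$ follows, at each stage, a transition attaining the minimum. It therefore reconstructs a set of that signature whose weight equals this minimum, which is at most $b$. This set is a feasible solution of profit $\OPT$.

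The main obstacle I expect is the careful verification of the $d$ update, namely the claim $a'\le d$ and the resulting identity $d(S)=\max\{d,a'+1\}$. The reset of $a$ to $0$ at color changes is also delicate: it relies on the contiguity of the color blocks, and the proof should state that reliance explicitly.
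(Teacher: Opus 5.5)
Your proposal is correct and follows essentially the same route as the paper: induction on $i$ with the pack/do-not-pack case split, and the reset of the color counter justified by contiguity of the color blocks. It also uses the same identity $d(S)=\max\{d(S\setminus\{i\}),a'+1\}=d(S\setminus\{i\})+\sigma(a',d(S\setminus\{i\}))$ via $a'\le d(S\setminus\{i\})$, and reads off $\OPT$ from the terminal states with $2d\le t+1$ and $g\le b$. Your version is in fact more explicit than the paper's: you prove both inequalities through the disjoint-union decomposition of each state's family and check that $\mathcal Z$ contains every reachable profit, points the paper leaves implicit.
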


\begin{proof}
We proceed by induction on $i$. For $i=0$ the only subset is $\emptyset$, with $t=d=a=z=0$ and weight $0$, which is the initialization.
Assume the claim holds for $i-1$. Since the items are sorted by color, either $i=1$ or $\kappa_i=\kappa_{i-1}$ (the current color continues, and the counter
$a$ is inherited) or $\kappa_i\neq\kappa_{i-1}$ (item $i$ is the first of a new color, and the counter is reset to $a'=0$). Let $S\subseteq\{1,2,\dots,i\}$ realize a state $(i,t,d,a'',z)$. If $i\notin S$, then $S\subseteq\{1,2,\dots,i-1\}$ realizes the state $(i-1,t,d,a,z)$ for the appropriate value of $a$ and is propagated by the first transition, with unchanged $t$, $d$ and $z$. If $i\in S$, then $S\setminus\{i\}$ realizes a state at stage $i-1$ with $t-1$ items, profit $z-p_i$, weight $w(S)-w_i$, current-color counter $a'=a''-1$ and dominant cardinality $d(S\setminus\{i\})$; since $d(S)=\max\{d(S\setminus\{i\}),a'+1\}$ and $a'\le d(S\setminus\{i\})$, we have $d(S)=d(S\setminus\{i\})+\sigma(a',d(S\setminus\{i\}))$, which is exactly the update of line~6. Moreover $S\setminus\{i\}$ must be of minimum weight among the subsets realizing its state, otherwise replacing it would contradict the minimality of $w(S)$. By the induction hypothesis its weight is stored at stage $i-1$, and the second transition adds $w_i$. Since the two cases are exhaustive and mutually exclusive and the algorithm takes minima over all incoming transitions, the claim follows. The output is correct because a set $S$ is feasible if and only if $2d(S)\le|S|+1$ and $w(S)\le b$, and for each profit level it is enough to know the minimum weight realizing it.
\end{proof}

It is worth noting that the state-reduction rules described in~\cite{CHM2026} can be adapted to $\firstDPp$, but they are optional and are not used in the correctness or running time results below.

\section{A first approximation scheme}
\label{sec:fptas}

We are now ready to state and analyze the first approximation scheme we propose in this article, summarized in Algorithm~\ref{alg:fptas}.
The idea is the same as in the classical FPTAS for the \KP: profits are scaled so that the profit dimension of the DP table becomes polynomial in $n$ and $1/\varepsilon$. More precisely, given the lower bound $\LB$ of~\eqref{eq:bounds}, we set $\scalefactor=\varepsilon\LB/n$ and replace every profit $p_i$ by $\hat p_i=\lfloor p_i/\scalefactor\rfloor$, $i \in \items$. It can be shown that the total loss caused by rounding down the scaled profits is then at most $n\scalefactor=\varepsilon\LB\le\varepsilon\OPT$. At the same time, the number of scaled profit levels in the DP tabledepends on the value $\UB/\scalefactor$, and hence by the ratio $\UB/\LB$. 

\begin{algorithm}[]
\caption{FPTAS for the \ColKP}
\label{alg:fptas}
\begin{algorithmic}[1]
\Require {a \ColKP{} instance and an
accuracy $\varepsilon\in(0,1)$}
\Ensure {a feasible \ColKP{} solution $\hat S\subseteq\items$}
\State let $\pmax=\max \{p_i:i\in\items\}$
       \Comment{Lemma~\ref{lem:verynegative}}
\State $S^H\leftarrow\textsc{LP-Round}$; $\LB\leftarrow p(S^H)$;
       $\UB\leftarrow3\LB$ \Comment{Theorem~\ref{thm:third}}
\State $\scalefactor\leftarrow\varepsilon\LB/n$; \quad
       $\hat p_i\leftarrow\lfloor p_i/\scalefactor\rfloor$ for all $i\in \items$
\If{$p_i\ge0$ for all $i\in\items$}
    \State $\zmin\leftarrow0$; \quad $\zmax\leftarrow\lceil\UB/\scalefactor\rceil$
\Else
    \State $\zmin\leftarrow\lfloor-n\pmax/\scalefactor\rfloor$; \quad
           $\zmax\leftarrow\lceil(\UB+n\pmax)/\scalefactor\rceil$
\EndIf
\State sort the items by color
\State run Algorithm~\ref{alg:firstDP-fptas} with profits $\hat p_i$, $i \in \items$, restricted to profit levels $z\in[\zmin,\zmax]$ and to states with $g\le b$
\State let $\hat S$ be a feasible set of maximum $\hat p(\hat S)$ found, and
       recover it by backtracking
\State \textbf{return} $\hat S$
\end{algorithmic}
\end{algorithm}

The following proposition establishes the formal correctness and approximation guarantee of the algorithm.

\begin{proposition}\label{thm:guarantee}
For every $\varepsilon\in(0,1)$, Algorithm~\ref{alg:fptas} returns a feasible \ColKP{} solution of value at least $(1-\varepsilon)\OPT$.
\end{proposition}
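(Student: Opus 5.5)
We follow the classical scaling argument~\eqref{eq:classical-scaling-bound}, adapted to the \ColKP{}. There are two extra points to handle. First, the restriction of the table to $[\zmin,\zmax]$ and to states with $g\le b$ must not discard the scaled image of an optimal solution. Second, floors of negative profits must still lose less than $\scalefactor$ per item.

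\emph{Feasibility.} By Proposition~\ref{prop:dpcorrect}, applied with the integer profits $\hat p_i$, every state stores the minimum weight of a set realizing $(i,t,d,a,z)$. The returned $\hat S$ satisfies $2d(\hat S)\le|\hat S|+1$ and $w(\hat S)\le b$, so it is feasible. Pruning states with $g>b$ is harmless. Weights are nonnegative, so every prefix of a set of weight at most $b$ also has weight at most $b$; hence no feasible set is lost because of a pruned intermediate state.

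\emph{The window is safe.} Let $S^\star$ be an optimal solution chosen by Lemma~\ref{lem:verynegative}, so that~\eqref{eq:profitrange} holds. We check that every partial set $S^\star\cap\{1,\dots,i\}$ has scaled profit in $[\zmin,\zmax]$. For nonnegative profits, partial sums are monotone, lie between $0$ and $\hat p(S^\star)\le \OPT/\scalefactor\le\UB/\scalefactor$, and therefore fit. For general profits, each partial set $T$ has $p(T)\ge -n\pmax$, since there are at most $n$ items, each with profit above $-\pmax$. This gives $\hat p(T)\ge \lfloor -n\pmax/\scalefactor\rfloor$ up to the floor effects. We will handle those effects by writing $\hat p(T)\ge p(T)/\scalefactor-|T|$, and we may need to tighten the bound using Lemma~\ref{lem:tight}, which gives at most $(n-1)/2$ negative items. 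Similarly $p(T)\le p(S^\star)+n\pmax\le\UB+n\pmax$, since removing the items of $S^\star\setminus T$, each of profit above $-\pmax$, raises the profit by less than $n\pmax$. The window bounds are built exactly for this.

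\emph{Precision.} Since $\hat p_i\le p_i/\scalefactor<\hat p_i+1$ holds for all integers, including negative ones, we have $p(\hat S)\ge\scalefactor\hat p(\hat S)$. The DP maximizes $\hat p$ over feasible sets, and $S^\star$ is represented in the window, so $\hat p(\hat S)\ge\hat p(S^\star)$. Finally $\scalefactor\hat p(S^\star)>p(S^\star)-|S^\star|\scalefactor\ge\OPT-\varepsilon\LB\ge(1-\varepsilon)\OPT$, using $\LB\le\OPT$ from~\eqref{eq:bounds}.

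The main obstacle is the window step for signed profits. We must make sure the floors on negative profits do not push a partial sum below $\zmin$. If the crude count fails, we will reorder the argument and use the slack between $n\pmax$ and the true negative mass bound $((n-1)/2)\pmax$ from Lemma~\ref{lem:tight}.
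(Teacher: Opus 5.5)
Your outline is the paper's own proof. The prefixes of $S^\star$ survive the capacity and window pruning, so $\hat p(\hat S)\ge\hat p(S^\star)$, and the chain $p(\hat S)\ge\scalefactor\hat p(\hat S)\ge\scalefactor\hat p(S^\star)>\OPT-n\scalefactor\ge(1-\varepsilon)\OPT$ finishes the argument.

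There is one gap, in the only non-routine step: the lower end of the window when profits are signed. You leave it as a contingency (``we may need to tighten'', ``if the crude count fails''), and that contingency, as stated, does not close.
\begin{enumerate}[label=(\roman*)]
\item The crude count is insufficient. It only yields $\hat p(T)>-n\pmax/\scalefactor-n$, which may lie below $\zmin=\lfloor -n\pmax/\scalefactor\rfloor$.
\item Switching to Lemma~\ref{lem:tight} gives at most $(n-1)/2$ negative items in $S^\star$, each with $\hat p_j>-\pmax/\scalefactor-1$. Hence $\hat p(T)>-\frac{n-1}{2}\bigl(\frac{\pmax}{\scalefactor}+1\bigr)$.
\item Comparing this with $-n\pmax/\scalefactor$ is not automatic. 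Write $x=\pmax/\scalefactor$. The needed inequality $\frac{n-1}{2}(x+1)<nx$ is equivalent to $(n+1)x>n-1$.
\end{enumerate}
Put differently, the unscaled slack $\frac{n+1}{2}\pmax$ you point to is $\frac{n+1}{2}x$ in scaled units. It must absorb up to $\frac{n-1}{2}$ units of floor loss, so you need a lower bound on $x$, and your proposal never provides one.

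The missing ingredient is $\scalefactor<\pmax$, which the paper records at the very start of its proof. Since every profit is at most $\pmax$, we have $\LB\le\OPT\le n\pmax$. Hence $\scalefactor=\varepsilon\LB/n\le\varepsilon\pmax<\pmax$, i.e., $x>1$. It follows that $\hat p(T)>-\frac{n-1}{2}(x+1)>-nx\ge\zmin$ for every prefix $T$ of $S^\star$ when $S^\star$ contains negative items. When it contains none, $\hat p(T)\ge0$ trivially. With this inserted, your argument is complete.

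A minor point on your feasibility paragraph. Under the window and capacity restrictions, Proposition~\ref{prop:dpcorrect} no longer guarantees that each stored value is the global minimum weight for its state. What you actually use, and what does hold, is this: every finite stored value is the weight of a genuine set realizing that state. That value is also at most the weight of any set all of whose prefixes survive the pruning.
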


\begin{proof}
We observe that \eqref{eq:bounds} gives $\LB\ge\pmax>0$, so the scaling factor $\scalefactor=\varepsilon\LB/n$ is well defined and positive. Moreover, every item has profit at most $\pmax$, and hence $\OPT\le n\pmax$. Since $\LB\le\OPT$ and $\varepsilon<1$, it follows that
\begin{equation}
    \scalefactor=\frac{\varepsilon\LB}{n}
      \le\varepsilon\pmax<\pmax,
    \qquad\text{and thus}\qquad \frac{\pmax}{\scalefactor}>1.
    \label{eq:scalefactorltpmax}
\end{equation}

Let $S^\star$ be an optimal solution to a \ColKP{} instance. For every item $i$, the definition $\hat p_i=\lfloor p_i/\scalefactor\rfloor$ yields
\[
    \hat p_i\le \frac{p_i}{\scalefactor}<\hat p_i+1,
\]
or, equivalently, $\scalefactor\hat p_i\le p_i<\scalefactor(\hat p_i+1)$. Summing these inequalities over an arbitrary set $S\subseteq\items$ gives
\begin{equation}
    \scalefactor\hat p(S)\,\le\,p(S)\,\le\,\scalefactor\hat p(S)+|S|\scalefactor\,\le\,\scalefactor\hat p(S)+n\scalefactor \, .
    \label{eq:roundset}
\end{equation}

We now verify that truncating the profit axis does not remove any state needed to represent $S^\star$. Recall that the items have been sorted by color before the dynamic program is run. For every $i\in\{1,2,\dots,n\}$, let $S_i^\star=S^\star\cap\{1,2,\dots,i\}$ be the prefix of $S^\star$ containing the first $i$ items, and let $N=\{j\in S^\star:p_j<0\}$ be the set of items in $S^\star$ with negative profit. Every $j\in N$ satisfies $|p_j|<\pmax$ by~\eqref{eq:profitrange}. Hence
$
    \sum_{j\in N}|p_j|<n\pmax
$.
Furthermore, the profit of every prefix $S_i^\star$ is at least the sum of all the negative profits in $S^\star$ and at most the sum of all the positive profits in $S^\star$. Therefore
\begin{equation}
    -n\pmax < \sum_{j \in N} p_j
    \le p(S_i^\star)
    \le\sum_{j\in S^\star\setminus N}p_j
    =\sum_{j\in S^\star}p_j + \sum_{j\in N}|p_j|
    <\UB+n\pmax \, .
    \label{eq:prefix-profit-bounds}
\end{equation}

We now want to prove that the scaled profit of every partial solution $S_i^\star$ belongs to the interval $[\zmin,\zmax]$. Suppose first that some item profit is negative. If $N=\emptyset$, the scaled profit of each prefix is nonnegative. Otherwise, Lemma~\ref{lem:tight} gives
$|N|\le(n-1)/2$. For every $j\in N$,
\[
    \hat p_j=\left\lfloor\frac{p_j}{\scalefactor}\right\rfloor
      >\frac{p_j}{\scalefactor}-1
      >-\frac{\pmax}{\scalefactor}-1,
\]
while $\hat p_j\ge0$ for every nonnegative-profit item. Consequently, for every partial solution $S_i^\star$,
\begin{align*}
    \hat p(S_i^\star)
      &>-\frac{n-1}{2}\left(\frac{\pmax}{\scalefactor}+1\right) \\
      &>-\frac{n\pmax}{\scalefactor}
       \ge\left\lfloor-\frac{n\pmax}{\scalefactor}\right\rfloor=\zmin.
\end{align*}
On the other hand,~\eqref{eq:roundset} and
\eqref{eq:prefix-profit-bounds} give
\[
    \hat p(S_i^\star)
      \le\frac{p(S_i^\star)}{\scalefactor}
      <\frac{\UB+n\pmax}{\scalefactor}
      \le\left\lceil\frac{\UB+n\pmax}{\scalefactor}\right\rceil=\zmax.
\]
Thus the scaled profit of every partial solution belongs to $[\zmin,\zmax]$. Observe that, if all profits are nonnegative, $\hat p_i \ge 0$ for all $i \in \items$, and the tighter bounds used by the algorithm follow directly from
\[
    0\le\hat p(S_i^\star)
      \le\hat p(S^\star)
      \le\frac{p(S^\star)}{\scalefactor}
      =\frac{\OPT}{\scalefactor}\le\frac{\UB}{\scalefactor} \, .
\]
We therefore conclude that the scaled profit of every partial solution belongs to $[\zmin,\zmax]$ in this case as well.

We now want to show that the proposed dynamic program returns a solution whose scaled profit is at least $\hat p(S^\star)$. For every $i\in\{1,2,\dots,n\}$, since all weights are positive, we have that $w(S_i^\star)\le w(S^\star)\le b$: consequently, the state corresponding to $S_i^\star$ is not discarded for exceeding the capacity. For $i=n$, the resulting state has profit $\hat p(S_n^\star)=\hat p(S^\star)$ and satisfies $2d(S^\star)\le|S^\star|+1$, because $S^\star$ is feasible by assumption. Hence $\hat p(S^\star)$ is one of the terminal profit levels over which the dynamic program maximizes. Since $\hat S$ is chosen with maximum scaled profit among those terminal states, we obtain
\begin{equation}
    \hat p(\hat S)\ge\hat p(S^\star) \, .
    \label{eq:scaled-dominance}
\end{equation}

The value returned by Algorithm~\ref{alg:fptas} is therefore the value of a feasible solution and, using~\eqref{eq:roundset} and~\eqref{eq:scaled-dominance}, we obtain
\begin{align*}
    p(\hat S)
    &\ge \scalefactor\hat p(\hat S) \\
    &\ge \scalefactor\hat p(S^\star) \\
    &> p(S^\star)-|S^\star|\scalefactor \\
    &\ge \OPT-n\scalefactor \\
    &= \OPT-\varepsilon\LB \\
    &\ge (1-\varepsilon) \,\OPT.
\end{align*}
\end{proof}

After proving the approximation guarantee, we now bound the size of the scaled dynamic-programming table and derive the running time of the proposed approximation scheme.

\begin{theorem}\label{thm:time}
Algorithm~\ref{alg:fptas} runs in
$O (n^5 /\varepsilon)$ time when all profits are nonnegative, and in $O(n^6/\varepsilon)$ time for arbitrary integer profits. 
\end{theorem}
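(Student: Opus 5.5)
The plan is to bound the running time as the cost of the preprocessing steps plus the number of DP states times the constant work per state. The preprocessing consists of removing items with $p_i\le-\pmax$ (Lemma~\ref{lem:verynegative}), running \textsc{LP-Round} (Theorem~\ref{thm:third}), scaling, and sorting by color. These take $O(n)$ time, or $O(n\log n)$ for the sort (a counting sort on colors in $\{1,\dots,m\}$ with $m\le n$ gives $O(n)$). All of this is dominated by the DP.

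In Algorithm~\ref{alg:firstDP-fptas}, each state $(i-1,t,d,a,z)$ with finite value generates exactly two transitions, each computed in $O(1)$ time. The indices $i,t,d,a$ each range over $\{0,\dots,n\}$, which gives $O(n^4)$ combinations. The table is restricted to $z\in[\zmin,\zmax]$, so the total work is $O(n^4(\zmax-\zmin+1))$. The final maximization over terminal states and the backtracking are also bounded by the table size.

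The remaining step is to bound the width of the profit range using $\pmax\le\LB$, $\UB=3\LB$ from \eqref{eq:bounds}, and $\scalefactor=\varepsilon\LB/n$.
\begin{itemize}
\item If all profits are nonnegative, then $\zmax-\zmin=\lceil \UB/\scalefactor\rceil=\lceil 3n/\varepsilon\rceil=O(n/\varepsilon)$. The total running time is therefore $O(n^5/\varepsilon)$.
\item Otherwise, $\zmax-\zmin\le (\UB+2n\pmax)/\scalefactor+2\le (3\LB+2n\LB)\,n/(\varepsilon\LB)+2=O(n^2/\varepsilon)$. This yields $O(n^6/\varepsilon)$.
\end{itemize}
Nothing here is delicate. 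The key point is that \eqref{eq:bounds} provides $\UB/\LB=3$ rather than $O(n)$, and that $\pmax\le\LB$ keeps $n\pmax/\scalefactor=O(n^2/\varepsilon)$.

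One detail needs justification: restricting the DP to the window $[\zmin,\zmax]$ must not require extra bookkeeping. Transitions that leave the window are simply discarded, which Proposition~\ref{thm:guarantee} already shows is harmless, so each transition still costs $O(1)$. The integer range checks are performed in constant time, assuming arithmetic on numbers of input size is unit cost.
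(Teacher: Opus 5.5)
Your proof is correct and follows essentially the paper's argument. You bound the DP work by the number of $(i,t,d,a)$ combinations times the number $\Pi$ of profit levels, and use $\UB=3\LB$ and $\pmax\le\LB$ to get $\Pi=O(n/\varepsilon)$ or $\Pi=O(n^2/\varepsilon)$. The paper counts the counter $a$ more finely, via $a\le|\items_c|$, and obtains $O(n^2\Pi\sum_c|\items_c|^2)$, but since it then uses $\sum_c|\items_c|^2\le n^2$, it arrives at the same $O(n^4\Pi)$ bound as your direct count.
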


\begin{proof}
Consider first the case $p_i\ge0$ for all $i\in\items$, in which $\zmin=0$ and the profit axis has
\[
    \Pi\;=\;\zmax +1\;=\;O\!\left(\frac{\UB}{\scalefactor}\right)
    \;=\;O\!\left(\frac{3\LB\,n}{\varepsilon\LB}\right)
    \;=\;O\!\left(\frac{n}{\varepsilon}\right)
\]
levels.

In Algorithm~\ref{alg:firstDP-fptas} the state components $t$ and $d$ take $O(n)$ values each, while the counter $a$ of the current color is bounded by $|\items_c|$ for all colors $c \in \colors$. Summing over the items grouped by color, the number of states is
\[
    \sum_{c\in\colors}\sum_{i\in \items_c}O\bigl(n\cdot n\cdot|\items_c|\cdot\Pi\bigr)
    \;=\;O\!\left(n^2\Pi\sum_{c\in\colors}|\items_c|^2\right)
    \;=\;O\!\left(\frac{n^3}{\varepsilon}\sum_{c\in\colors}|\items_c|^2\right),
\]
and each state is processed with $O(1)$ transitions. Since
$\sum_c|\items_c|^2\le\bigl(\sum_c|\items_c|\bigr)^2=n^2$, the bound $O(n^5/\varepsilon)$ follows.

\medskip

For arbitrary integer profits, the profit axis has instead
\[
    \Pi=\zmax-\zmin+1
    \le \frac{\UB + 2n\pmax}{\scalefactor}+3
    = \frac{3 \LB n + 2n^2\pmax}{\varepsilon \LB}+3
    =O\left(\frac{n^2}{\varepsilon}\right).
\]
The running time therefore becomes
$O(n^6/\varepsilon)$.
\end{proof}

The additional factor $n$ comes from the need to retain every prefix of an optimal solution. With nonnegative profits, the profit of a partial solution increases monotonically and remains between $0$ and the profit of the complete solution, so using the interval $[0,\UB/\scalefactor]$ ensures that all possible profits are covered. With arbitrary profits, a partial solution may instead have negative profit or may exceed the final profit before later negative-profit items are processed. Both ends of the scaled profit axis must therefore include an additional contribution of order $n\pmax/\scalefactor$. 

\medskip

We conclude the section by illustrating the behaviour of the FPTAS detailed in Algorithm~\ref{alg:fptas} on the instance depicted in Example~\ref{ex:running}. We first set $\varepsilon=1/2$.
Algorithm~\ref{alg:round} returns $\LB=12$ on this instance, as described in Section~\ref{sec:constant}, so $\scalefactor=\varepsilon\LB/n=6/5$, the scaled profits are those of Table~\ref{tab:example}, and the profit axis is truncated at $\zmax=\lceil\UB/\scalefactor\rceil=30$. The largest scaled profit reached by a feasible solution is $z=13$, attained both by $\{2,5\}$, of weight $8$, and by the optimal solution $S^\star=\{2,4,5\}$, of weight $9$. These sets correspond to distinct terminal DP states, so either may be returned depending on tie-breaking. In particular, $\{2,5\}$ is a possible output, of true profit $16$. This illustrates the effect of scaling: $\hat p_4=0$, so item~4 can become invisible in the scaled objective although it still consumes capacity. The performance guarantee holds, since $16\ge(1-\varepsilon)\OPT=\frac{17}{2}$.

Consider instead $\varepsilon=1/100$. The scaling factor becomes $\scalefactor=\varepsilon\LB/n=3/125$, the scaled profits are given in the last row of Table~\ref{tab:example}, and the profit axis is truncated at $\zmax=\lceil\UB/\scalefactor\rceil=1500$. Now item~4 has scaled profit $41$, rather than zero. The largest scaled profit reached by a feasible solution is $707$, uniquely attained by the optimal solution $S^\star=\{2,4,5\}$. Such profit is strictly larger than the scaled profit $666$ of $\{2,5\}$. Hence the DP returns $S^\star$. In fact, the optimality of the returned solution also follows directly from the approximation analysis: it gives $p(\hat S)>\OPT-\varepsilon\LB=\frac{422}{25}$, and, since all solution profits are integral, this implies $p(\hat S)\ge17=\OPT$. Thus, for this instance, reducing $\varepsilon$ not only tightens the guaranteed ratio, but makes the FPTAS exact, at the cost of a profit axis that is fifty times longer.

\begin{table}[H]
\centering
\caption{Scaling of the item profits in the instance of Example~\ref{ex:running}, with $\LB=12$, for $\varepsilon=1/2$ and $\varepsilon=\frac{1}{100}$.}
\medskip
\label{tab:example}
\begin{tabular}{cccccc}
\toprule
 & $i=1$ & $i=2$ & $i=3$ & $i=4$ & $i=5$\\
\midrule
$p_i$ & 4 & 4 & 11 & 1 & 12\\
$\hat p_i$ ($\varepsilon=1/2$) & 3 & 3 & 9 & 0 & 10\\
$\hat p_i$ ($\varepsilon=\frac{1}{100}$) & 166 & 166 & 458 & 41 & 500\\
\bottomrule
\end{tabular}
\end{table}

\section{Can the color-by-color recursion do better?}
\label{sec:secondDP}

Neither of the exact dynamic programs proposed in~\cite{CHM2026} dominates the other: $\firstDP$ runs in $O(b \, n^4)$ and $\secondDP$ in $O(b^2 \, n^3)$. Thus $\secondDP$ has the smaller asymptotic bound when $b=o(n)$ and is no worse in order when $b=O(n)$. It is therefore natural to ask whether $\secondDP$ may provide a basis for a faster approximation scheme than the one based on $\firstDP$. We show that the direct profit-scaling implementation considered here does not improve the worst-case running time, and explain the source of the additional factor.

Recall the structure of $\secondDP$. For each color $c\in\colors$, an \emph{inner} dynamic program enumerates the relevant packings of the items of that color. An \emph{outer} dynamic program then combines one packing per color, in the spirit of a Multiple-Choice Knapsack Problem, keeping track of the number of packed items and of the number of items of a dominant color. $\secondDP$ is indexed by the amount of consumed capacity, and its value corresponds to a solution profit. We denote by $\secondDPp$ the DP obtained by exchanging the two dimensions, exactly as $\firstDPp$ was obtained from $\firstDP$ in Section~\ref{sec:profitdp}: the inner and the outer DPs are therefore indexed by the profit, and their value is the minimum weight of a solution achieving that profit. The exchange can be performed here for the same reason as in the case of $\firstDPp$. By~\eqref{eq:feas}, the color feasibility of a set $S\subseteq\items$ depends only on the color cardinalities $k_c(S)$, $c\in\colors$. Hence, replacing a packing $S_c\subseteq\items_c$ chosen for a color $c \in \colors$ by any other packing $S_c'\subseteq\items_c$ with $|S_c'|=|S_c|$ and $p(S_c')=p(S_c)$ leaves the cardinalities and the profit of the overall solution unchanged, and does not increase its weight if $w(S_c')\le w(S_c)$. Within a color class, therefore, we need to keep only one minimum-weight packing per pair cardinality-profit.

Specifically, for every color $c\in\colors$, cardinality $k\in\{0,1,\dots,|\items_c|\}$ and integer profit level $\zeta$, let
\begin{equation}
    h_c(k,\zeta)=\min\bigl\{w(S)\;:\;S\subseteq\items_c,\;
      |S|=k,\;p(S)=\zeta\bigr\},
    \label{eq:innerdp}
\end{equation}
where $h_c(k,\zeta)=+\infty$ if no solution with $k$ items and profit $\zeta$ exists. This table can be computed by the usual item-by-item knapsack DP with an additional cardinality counter, with the roles of capacity and profit exchanged.
Following the notation of~\cite{CHM2026}, we then define
\[
    \mathcal S_c=\bigl\{(k,\zeta)\in
       \{0,1,\dots,|\items_c|\}\times \mathbb Z\;:\;
       h_c(k,\zeta)<+\infty\bigr\}
\]
that is, a set containing all combinations of number of items $k$ and profit $\zeta$ for which there exists a feasible packing of the items having color $c \in \colors$. The weight associated with a pair $(k,\zeta)$ is $h_c(k,\zeta)$. In particular, $(0,0)\in\mathcal S_c$ represents an empty solution.

For $c\in\{1,2,\dots,m\}$, $t,d\in\{0,1,\dots,n\}$ and integer profit level $z$, the outer program of $\secondDPp$ then computes
\begin{equation}
\begin{split}
    g(c,t,d,z)=\min\Bigl\{\,\sum_{j=1}^c h_j(k_j,\zeta_j)\;:\; 
       &(k_j,\zeta_j)\in\mathcal S_j~\text{ for all }~j \in \{1,2,\dots,c\}, \\
       &\sum_{j=1}^c k_j=t,\quad
       \max_{1\le j\le c}k_j=d,\quad
       \sum_{j=1}^c\zeta_j=z\Bigr\}.
\end{split}
    \label{eq:outerdp}
\end{equation}
We initialize the empty solution as $g(0,0,0,0)=0$, with all other states having value equal to $+\infty$. For each state $(c-1,t,d,z)$ such that $g(c-1,t,d,z)<+\infty$, and each $(k,\zeta)\in\mathcal S_c$, the forward transition is
\begin{equation}
\begin{split}
    g(c,t+k,\max\{d,k\},z+\zeta)\leftarrow\min\bigl\{&
       g(c,t+k,\max\{d,k\},z+\zeta),\\
       &g(c-1,t,d,z)+h_c(k,\zeta)\bigr\}.
\end{split}
    \label{eq:outertransition}
\end{equation}
After running the outer DP, the maximum profit level with $g(m,t,d,z)\le b$ and $2d\le t+1$ for some $t,d \in \{0,1,\dots,n\}$ is therefore the optimal profit, and a corresponding solution can be recovered by backtracking through the outer and inner programs.

Algorithm~\ref{alg:secondDP-profit} reports the complete pseudocode of the profit-indexed color-by-color dynamic program $\secondDPp$.

\begin{algorithm}[]
\caption{\textsc{$\secondDPp$}: profit-indexed color-by-color dynamic program}
\label{alg:secondDP-profit}
\begin{algorithmic}[1]
\Require a \ColKP{} instance
\Ensure an optimal \ColKP{} solution $S^\star\subseteq\items$ and its value
        $p(S^\star)=\OPT$
\State $p_{\mathrm{abs}}\leftarrow\sum_{i\in\items}|p_i|$; \quad
       $\mathcal Z\leftarrow\{-p_{\mathrm{abs}},-p_{\mathrm{abs}}+1,\dots,p_{\mathrm{abs}}\}$
\For{$c=1,\dots,m$}
  \State compute $\mathcal S_c$ and the values $h_c(k,\zeta)$ for $(k,\zeta)\in\mathcal S_c$
\EndFor
\State $g(c,t,d,z)\leftarrow+\infty$ for all
       $(c,t,d,z)\in\{0,\dots,m\}\times\{0,\dots,n\}^2\times\mathcal Z$
\State $g(0,0,0,0)\leftarrow0$
\For{$c=1,\dots,m$}
  \For{all $(t,d,z)$ with $g(c-1,t,d,z)<+\infty$}
    \For{all $(k,\zeta)\in\mathcal S_c$}
      \State $t'\leftarrow t+k$; \quad $d'\leftarrow\max\{d,k\}$;
             \quad $z'\leftarrow z+\zeta$
      \State $\bar w\leftarrow g(c-1,t,d,z)+h_c(k,\zeta)$
      \State $g(c,t',d',z')\leftarrow\min\{g(c,t',d',z'),\bar w\}$
    \EndFor
  \EndFor
\EndFor
\State $z^\star\leftarrow\max\{z\in\mathcal Z:\exists\,(t,d)\text{ with }
       2d\le t+1\text{ and }g(m,t,d,z)\le b\}$
\State choose $(t^\star,d^\star)$ with $2d^\star\le t^\star+1$ and
       $g(m,t^\star,d^\star,z^\star)\le b$
\State recover $S^\star$ by backtracking through the outer and inner transitions
\State \textbf{return} $S^\star$ and $z^\star$
\end{algorithmic}
\end{algorithm}

The next proposition states that $\secondDPp$ is exact. 
\begin{proposition}\label{prop:secondDPcorrect}
Algorithm~\ref{alg:secondDP-profit} computes $g(c,t,d,z)$ as defined in~\eqref{eq:outerdp} for every state, and returns an optimal \ColKP{} solution.
\end{proposition}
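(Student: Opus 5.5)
The plan has three parts, in order: show that the inner tables $h_c$ are correct, show by induction on $c$ that the outer recursion computes~\eqref{eq:outerdp}, and then show that the terminal selection returns an optimal solution. This mirrors the proof of Proposition~\ref{prop:dpcorrect}, with colors playing the role of items.

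\textbf{Inner tables.} For the first step I would appeal to the standard cardinality-constrained knapsack recursion with profit as the index. Define $h_c^{(r)}(k,\zeta)$ over the first $r$ items of $\items_c$. A subset either omits the $r$-th item, or contains it, in which case removing it gives a subset with $k-1$ items and profit $\zeta-p_i$. The minimum-weight substructure argument (replace the restricted subset by a lighter one with the same cardinality and profit) shows $h_c=h_c^{(|\items_c|)}$. Consequently $\mathcal S_c$ is exactly the set of pairs $(k,\zeta)$ realized by some $S_c\subseteq\items_c$.

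\textbf{Outer recursion.} For the second step I would induct on $c$, with base case $g(0,0,0,0)=0$, which corresponds to the empty selection. Assume the claim for $c-1$.

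One direction is that every value written by~\eqref{eq:outertransition} is attainable. Appending $(k,\zeta)\in\mathcal S_c$ to a witness tuple for $(c-1,t,d,z)$ gives a tuple whose totals are $t+k$, $\max\{d,k\}$ and $z+\zeta$, with weight $g(c-1,t,d,z)+h_c(k,\zeta)$. Hence the stored value is at least the minimum in~\eqref{eq:outerdp}.

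The other direction is that the minimum is reached. Take an optimal tuple $(k_j,\zeta_j)_{j\le c}$. Its prefix over the first $c-1$ colors realizes the state $(c-1,t-k_c,d^-,z-\zeta_c)$, where $d^-=\max_{j<c}k_j$, and $\max\{d^-,k_c\}=d$. The prefix must have minimum weight for its own state, otherwise swapping it out would lower the total weight. By the inductive hypothesis that weight is stored, so the transition with $(k_c,\zeta_c)$ produces the optimum. I would add a remark that $d^-$ is recorded exactly, not merely bounded, so the $\max$ update is valid. I would also note that restricting to $\mathcal Z$ loses nothing, since every partial profit lies in $[-p_{\mathrm{abs}},p_{\mathrm{abs}}]$.

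\textbf{Terminal selection.} The third step is the part that needs care, and I expect it to be the main obstacle, even though it is conceptual rather than technical. The outer DP never sees the actual items, only the pairs $(k,\zeta)$. So I have to argue that the tuples correspond to feasible solutions, in both directions.

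Any $S\subseteq\items$ decomposes uniquely into $S_c=S\cap\items_c$. It therefore induces a tuple $(|S_c|,p(S_c))_c$ with $|S|=\sum_c k_c$, $d(S)=\max_c k_c$, $p(S)=\sum_c\zeta_c$, and $w(S)\ge\sum_c h_c(k_c,\zeta_c)$.

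Conversely, any tuple can be realized by choosing minimum-weight packings $S_c$ attaining $h_c(k_c,\zeta_c)$. These packings are disjoint because the color classes are disjoint, and their union has exactly the recorded $t$, $d$, $z$ and weight $g$. Since feasibility depends only on $w(S)\le b$ and on~\eqref{eq:feas}, which depend only on $t$, $d$ and the weight, the largest $z$ with $g(m,t,d,z)\le b$ and $2d\le t+1$ equals $\OPT$.

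Finally, backtracking through the outer transitions and then through each inner table reconstructs exactly such a union, which gives an optimal solution.
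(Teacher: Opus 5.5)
Your proposal is correct and follows essentially the same route as the paper: you prove the inner tables $h_c$ correct, use induction on $c$ to establish both inequalities between the stored value and the minimum in~\eqref{eq:outerdp}, and realize each terminal state as a union of minimum-weight per-color packings. Your explicit decomposition of an arbitrary feasible $S$ into the pairs $(|S\cap\items_c|,p(S\cap\items_c))$ usefully spells out the bound $g(m,|S|,d(S),p(S))\le w(S)$, which the paper's final ``if and only if'' leaves implicit; one small correction is that the union has weight $g(m,t,d,z)$ only for a tuple attaining that minimum, not for an arbitrary admissible tuple.
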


\begin{proof}
For every color $c\in\colors$, a preliminary item-by-item DP computes $h_c(k,\zeta)$ as defined in~\eqref{eq:innerdp}; consequently, $\mathcal S_c$ contains exactly the attainable cardinality--profit pairs for color $c$.

We now prove the claim on the outer DP by induction on $c$. For $c=0$ the only admissible choice in~\eqref{eq:outerdp} is the empty one, which gives $g(0,0,0,0)=0$ and $g(0,t,d,z)=+\infty$ for every other state. Assume that the claim holds for the states of stage $c-1$, and consider a state $(c,t,d,z)$.

We first show that the value computed at $(c,t,d,z)$ is at most the right-hand side of~\eqref{eq:outerdp}. Let $(k_j,\zeta_j)_{j=1}^{c}$ be admissible for $(c,t,d,z)$ and let
\[
    t'=t-k_c,\qquad d'=\max_{1\le j\le c-1}k_j,\qquad
    z'=z-\zeta_c .
\]
The prefix $(k_j,\zeta_j)_{j=1}^{c-1}$ is admissible for the state $(c-1,t',d',z')$, so that $g(c-1,t',d',z')\le\sum_{j=1}^{c-1}h_j(k_j,\zeta_j)<+\infty$ by the induction hypothesis. Moreover $t'+k_c=t$, $\max\{d',k_c\}=d$ and $z'+\zeta_c=z$. Hence the transition~\eqref{eq:outertransition} applied to the state $(c-1,t',d',z')$ with the pair $(k_c,\zeta_c)\in\mathcal S_c$ writes into $(c,t,d,z)$, and
\[
    g(c,t,d,z)\;\le\;g(c-1,t',d',z')+h_c(k_c,\zeta_c)
      \;\le\;\sum_{j=1}^{c}h_j(k_j,\zeta_j).
\]
Taking the minimum over all admissible choices gives the desired inequality.

Conversely, every finite value written into $(c,t,d,z)$ is of the form $g(c-1,t',d',z')+h_c(k,\zeta)$, for some state $(c-1,t',d',z')$ of finite value and some pair $(k,\zeta)\in\mathcal S_c$ with $t'+k=t$, $\max\{d',k\}=d$ and $z'+\zeta=z$. By the induction hypothesis, $g(c-1,t',d',z')$ is attained by a choice $(k_j,\zeta_j)_{j=1}^{c-1}$ admissible for $(c-1,t',d',z')$. Setting $(k_c,\zeta_c)=(k,\zeta)$ extends it to a choice that is admissible for $(c,t,d,z)$ and whose value is exactly the value written. No state of the table has therefore value  smaller than the right-hand side of~\eqref{eq:outerdp}, and the two coincide. 

It remains to examine the output. Let $(t,d,z)$ be such that $g(m,t,d,z)<+\infty$, let $(k_c,\zeta_c)_{c=1}^{m}$ be an admissible choice attaining that value, and let $S_c\subseteq\items_c$ be a minimum-weight solution featuring $k_c$ items, with a profit of $\zeta_c$, and a total weight of $h_c(k_c,\zeta_c)$. The set $S=\bigcup_{c\in\colors}S_c$ satisfies $k_c(S)=k_c$ for every color $c\in\colors$, and hence
\[
    |S|= \sum_{c=1}^m k_c,\qquad d(S)=\max_{c\in\colors}k_c=d,\qquad p(S)= \sum_{c = 1}^m \zeta_c=z,\qquad
    w(S)=g(m,t,d,z).
\]
A profit level $z$ is therefore obtained by a feasible solution if and only if $g(m,t,d,z)\le b$ and $2d\le t+1$ for some $t,d\in\{0,1,\dots,n\}$, so that the largest such $z$ equals $\OPT$ and backtracking returns an optimal solution.
\end{proof}

For the approximation scheme, both the inner and the outer program of $\secondDPp$ are run on the scaled item profits $\hat p_i$, $i\in\items$, and both are restricted to the profit levels $\mathcal Z=\{\zmin,\zmin+1,\dots,\zmax\}$, with endpoints as in Algorithm~\ref{alg:fptas}. Let $\Pi=|\mathcal Z| = \zmax - \zmin + 1$. States with a weight greater than $b$ are discarded, so that a transition of both the inner and the outer DP is performed only if the profit level it produces belongs to $\mathcal Z$ and the weight it produces is at most $b$. In the next proposition, we show that these restrictions preserve the states needed for the approximation guarantee.
The resulting scheme is detailed in Algorithm~\ref{alg:secondDP-fptas}.

\begin{algorithm}[]
\caption{FPTAS for the \ColKP{} based on $\secondDPp$}
\label{alg:secondDP-fptas}
\begin{algorithmic}[1]
\Require {a \ColKP{} instance and an accuracy $\varepsilon\in(0,1)$}
\Ensure {a feasible \ColKP{} solution $\hat S\subseteq\items$}
\State let $\pmax=\max \{p_i : i\in\items\}$
       \Comment{Lemma~\ref{lem:verynegative}}
\State $S^H\leftarrow\textsc{LP-Round}$; $\LB\leftarrow p(S^H)$;
       $\UB\leftarrow3\LB$ \Comment{Theorem~\ref{thm:third}}
\State $\scalefactor\leftarrow\varepsilon\LB/n$; \quad
       $\hat p_i\leftarrow\lfloor p_i/\scalefactor\rfloor$ for all $i\in\items$
\If{$p_i\ge0$ for all $i\in\items$}
    \State $\zmin\leftarrow0$; \quad $\zmax\leftarrow\lceil\UB/\scalefactor\rceil$
\Else
    \State $\zmin\leftarrow\lfloor-n\pmax/\scalefactor\rfloor$; \quad
           $\zmax\leftarrow\lceil(\UB+n\pmax)/\scalefactor\rceil$
\EndIf
\State run Algorithm~\ref{alg:secondDP-profit} with profits $\hat p_i$,
       restricting both inner and outer tables to profit levels
       $z\in[\zmin,\zmax]$ and to states of weight at most $b$
\State let $\hat S$ be a feasible set of maximum $\hat p(\hat S)$ found, and
       recover it by backtracking
\State \textbf{return} $\hat S$
\end{algorithmic}
\end{algorithm}

\begin{proposition}\label{prop:secondDP}
For every $\varepsilon\in(0,1)$, Algorithm~\ref{alg:secondDP-fptas} returns a
feasible \ColKP{} solution of value at least $(1-\varepsilon)\OPT$.
\end{proposition}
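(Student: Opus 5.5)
The argument will follow the proof of Proposition~\ref{thm:guarantee} closely. The only new ingredient is that the truncation now acts on two tables. We must therefore check that the states encoding a fixed optimal solution $S^\star$ survive the truncation in both the inner tables $h_c$ and the outer table $g$. Once this is shown, the rounding inequality~\eqref{eq:roundset} gives $p(\hat S)\ge\scalefactor\hat p(\hat S)\ge\scalefactor\hat p(S^\star)>\OPT-n\scalefactor\ge(1-\varepsilon)\OPT$, exactly as before. Feasibility of $\hat S$ follows from Proposition~\ref{prop:secondDPcorrect}: every finite terminal state with $2d\le t+1$ and $g\le b$ corresponds to a union of per-color packings that is color-feasible and has weight at most $b$. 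Restricting the tables only removes states and never creates spurious ones.

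\textbf{Step 1: the target states exist.} Fix $S^\star$ and set $S^\star_c=S^\star\cap\items_c$. In the inner program for color $c$, items are processed one at a time, and the relevant partial states are the prefixes $S^\star_c\cap\{\text{first } r \text{ items of } \items_c\}$. In the outer program, the relevant states are the unions $U_c=\bigcup_{j\le c}S^\star_j$, each reached from $U_{c-1}$ via the pair $(|S^\star_c|,\hat p(S^\star_c))$. Every such set is a subset of $S^\star$. Positivity of the weights then gives weight at most $w(S^\star)\le b$, so none of these states is discarded for capacity.

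\textbf{Step 2: scaled profits stay in range.} This is the main point. Every set listed in Step 1 is a subset $T\subseteq S^\star$, and the argument of Proposition~\ref{thm:guarantee} for prefixes applies verbatim to arbitrary subsets of $S^\star$. For the lower bound, $\hat p(T)$ is at least the sum of the scaled negative profits in $S^\star$. By Lemma~\ref{lem:tight} there are at most $(n-1)/2$ such items, each with $\hat p_j>-\pmax/\scalefactor-1$ by~\eqref{eq:profitrange}, so $\hat p(T)>-n\pmax/\scalefactor\ge\zmin$. For the upper bound, $\hat p(T)\le p(T)/\scalefactor$, and $p(T)$ is at most the sum of the positive profits of $S^\star$, which is $<\UB+n\pmax$ by~\eqref{eq:prefix-profit-bounds}. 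In the nonnegative case, $0\le\hat p(T)\le\OPT/\scalefactor\le\UB/\scalefactor$. Hence all these states lie in $[\zmin,\zmax]$.

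\textbf{Step 3: the minimum-weight representative does not break the chain.} This is where I expect the only subtlety. The inner table stores, for $(|S^\star_c|,\hat p(S^\star_c))$, some minimum-weight packing of weight $h_c\le w(S^\star_c)$ rather than $S^\star_c$ itself. I will argue by induction on $c$ that $g(c,|U_c|,\max_{j\le c}|S^\star_j|,\hat p(U_c))\le w(U_c)$. The inductive step uses the transition~\eqref{eq:outertransition} with the pair $(|S^\star_c|,\hat p(S^\star_c))\in\mathcal S_c$. That pair is present because Step 2 applied to the prefixes of $S^\star_c$ keeps its inner computation within range and within capacity, and the resulting value is at most $w(U_{c-1})+w(S^\star_c)=w(U_c)\le b$. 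At $c=m$ this yields a terminal state with $t=|S^\star|$, $d=d(S^\star)$, $2d\le t+1$, weight at most $b$, and profit level $\hat p(S^\star)$. The DP maximizes over such terminal states, so $\hat p(\hat S)\ge\hat p(S^\star)$, and the chain of inequalities above completes the proof.
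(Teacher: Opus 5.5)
Your proposal is correct and follows essentially the same route as the paper's proof. You prove the stronger claim that $\hat p(T)\in[\zmin,\zmax]$ for every $T\subseteq S^\star$ (via Lemma~\ref{lem:tight}), follow the prefixes of $S^\star_c$ through the inner tables and the unions $\bigcup_{j\le c}S^\star_j$ through the outer table by induction on $c$, and finish with the rounding chain from~\eqref{eq:roundset}; the only ingredient you leave implicit is that the step $-\frac{n-1}{2}(\pmax/\scalefactor+1)>-n\pmax/\scalefactor$ uses $\pmax/\scalefactor>1$ from~\eqref{eq:scalefactorltpmax}, which the paper restates at the start of its proof.
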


\begin{proof}
By positivity of $\pmax$, we have $\LB\ge\pmax>0$, while \eqref{eq:scalefactorltpmax} gives $\pmax/\scalefactor>1$.

Let $S^\star$ be an optimal solution to a \ColKP{} instance. As in~\eqref{eq:roundset}, for every set $S\subseteq\items$ we have
\begin{equation}
    \scalefactor\hat p(S)\le p(S)
    \le\scalefactor\hat p(S)+|S|\scalefactor
    \le\scalefactor\hat p(S)+n\scalefactor.
    \label{eq:roundset-secondDP}
\end{equation}

We now verify that truncating the profit axis does not remove any state needed to represent $S^\star$. The inner and outer programs build subsets of $S^\star$ in different orders, so we prove the stronger statement that $\hat p(T)\in\mathcal Z$ for every $T\subseteq S^\star$. If all profits are nonnegative, then
\[
    0\le\hat p(T)\le\hat p(S^\star)
      \le\OPT/\scalefactor\le\UB/\scalefactor\le\zmax.
\]
Suppose now that some item profit is negative, and let
$N=\{i\in S^\star:p_i<0\}$. If $N=\emptyset$, the lower bound follows from
$\hat p(T)\ge0$. Otherwise, Lemma~\ref{lem:tight} gives
$|N|\le(n-1)/2$, while $p_i>-\pmax$ for every $i\in N$.
Consequently,
\[
    \hat p(T)\ge\sum_{i\in N}\hat p_i
      >-\frac{n-1}{2}\left(\frac{\pmax}{\scalefactor}+1\right)
      >-\frac{n\pmax}{\scalefactor}\ge\zmin.
\]
For the upper bound, whether or not $N$ is empty,
\[
    \hat p(T)\le\frac{1}{\scalefactor}
       \sum_{i\in S^\star\setminus N}p_i
      =\frac{\OPT+\sum_{i\in N}|p_i|}{\scalefactor}
      <\frac{\UB+n\pmax}{\scalefactor}\le\zmax.
\]

We next show that the restricted program reaches a terminal state whose scaled profit is $\hat p(S^\star)$. Let $S_c^\star=S^\star\cap\items_c$. Every prefix of $S_c^\star$ in the item order of $\items_c$ is a subset of $S^\star$, hence its scaled profit belongs to $\mathcal Z$ and its weight is at most $b$. No transition along the path that builds $S_c^\star$ in the inner program is therefore discarded. The inner table consequently contains a state with cardinality $|S_c^\star|$, scaled profit $\hat p(S_c^\star)$, and weight no greater than
$w(S_c^\star)$.

For $c=0,1,\dots,m$, let $T_c^\star=\bigcup_{j=1}^cS_j^\star$. Since $T_c^\star\subseteq S^\star$, its scaled profit belongs to $\mathcal Z$ and its weight is at most $b$. Induction on $c$ now shows that the outer program retains a state at
\[
    \Bigl(c,\;|T^\star_c|,\;\max_{j\le c}|S^\star_j|,\;
      \hat p(T^\star_c)\Bigr)
\]
of weight no greater than $w(T_c^\star)$. The base case is the initialization at $c=0$. For the inductive step, combine the retained state for $T_{c-1}^\star$ with the inner-table pair for $S_c^\star$. The resulting profit and weight are those of $T_c^\star$, so neither restriction discards the transition.

At $c=m$, the retained state has scaled profit $\hat p(S^\star)$, cardinality $|S^\star|$, dominant cardinality $d(S^\star)$, and weight at most $w(S^\star)\le b$. Since $S^\star$ is color-feasible, it satisfies $2d(S^\star)\le|S^\star|+1$. Thus $\hat p(S^\star)$ is among the terminal profit levels over which the algorithm maximizes. By Proposition~\ref{prop:secondDPcorrect}, every retained terminal state represents an actual subset. Therefore the returned set $\hat S$ is feasible and
\begin{equation}
    \hat p(\hat S)\ge\hat p(S^\star).
    \label{eq:scaled-dominance-secondDP}
\end{equation}
Using~\eqref{eq:roundset-secondDP} and \eqref{eq:scaled-dominance-secondDP}, we conclude that
\begin{align*}
    p(\hat S)
    &\ge\scalefactor\hat p(\hat S) \\
    &\ge\scalefactor\hat p(S^\star) \\
    &>p(S^\star)-|S^\star|\scalefactor \\
    &\ge\OPT-n\scalefactor \\
    &=\OPT-\varepsilon\LB \\
    &\ge(1-\varepsilon)\OPT.
\end{align*}
\end{proof}

\begin{theorem}\label{thm:secondDPtime}
Algorithm~\ref{alg:secondDP-fptas} runs in 
$O(n^5/\varepsilon^2)$ time when all profits are nonnegative and in $O(n^7/\varepsilon^2)$ time for arbitrary integer profits.
\end{theorem}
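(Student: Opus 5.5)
We bound the work of the inner and outer programs separately, in terms of the profit-axis length $\Pi=\zmax-\zmin+1$. From the proof of Theorem~\ref{thm:time}, $\Pi=O(n/\varepsilon)$ when all profits are nonnegative and $\Pi=O(n^2/\varepsilon)$ otherwise. Preprocessing (Lemma~\ref{lem:verynegative}), \textsc{LP-Round} (Theorem~\ref{thm:third}) and the scaling step all take $O(n)$ time, so only the two dynamic programs matter.

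\emph{Inner programs.} For a color $c$, the table $h_c$ is computed item by item over the items of $\items_c$. Each stage has $O(|\items_c|\,\Pi)$ states $(k,\zeta)$ and each state has $O(1)$ transitions. This gives $O(|\items_c|^2\Pi)$ time per color. Summing over colors and using $\sum_c|\items_c|^2\le n^2$ gives $O(n^2\Pi)$, which is dominated by the outer program.

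\emph{Outer program.} This is the dominant cost, and the key point is how the term $|\mathcal S_c|$ is counted. Stage $c-1$ has at most $O(n^2\Pi)$ finite states $(t,d,z)$. Each is combined with every pair $(k,\zeta)\in\mathcal S_c$, and because of the profit restriction $|\mathcal S_c|\le(|\items_c|+1)\Pi$. The work at color $c$ is therefore $O(n^2\Pi\cdot|\items_c|\Pi)$. Summing over $c$ and using $\sum_c|\items_c|=n$, the total is $O(n^3\Pi^2)$. This linear sum over colors is better than the quadratic bound of $\firstDPp$ in the $a$ dimension, but we pay a second factor $\Pi$ because profit levels are convolved rather than incremented.

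\emph{Substitution.} In the nonnegative case, $O(n^3(n/\varepsilon)^2)=O(n^5/\varepsilon^2)$. In the general case, $O(n^3(n^2/\varepsilon)^2)=O(n^7/\varepsilon^2)$.

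The main obstacle is to justify that the outer bound really is $O(n^3\Pi^2)$, so it must not be inflated by states outside $\mathcal Z$. Transitions are performed only when the new profit level lies in $\mathcal Z$ and the weight is at most $b$. Hence each stage's table has size $O(n^2\Pi)$, and each $\mathcal S_c$ has size $O(|\items_c|\Pi)$, since $\zeta$ ranges over $\mathcal Z$ and $k\le|\items_c|$. Restricting $d\le t\le n$ keeps the $n^2$ factor, and ranging $t$ and $d$ only up to $\sum_{j\le c}|\items_j|$ would not improve the worst case. Backtracking costs $O(n)$ additional steps.
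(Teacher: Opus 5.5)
Your proposal is correct and follows the paper's proof essentially step by step. You bound each inner table by $O(|\items_c|^2\Pi)$ and sum via $\sum_c|\items_c|^2\le n^2$; you bound each outer stage by $O(n^2\Pi)\cdot O(|\items_c|\Pi)$ and sum to $O(n^3\Pi^2)$; and you substitute $\Pi=O(n/\varepsilon)$ or $\Pi=O(n^2/\varepsilon)$. The one step you leave implicit is replacing $(|\items_c|+1)\Pi$ by $O(|\items_c|\Pi)$, which requires every color class to be nonempty (or empty classes to be discarded beforehand); the paper states this assumption explicitly.
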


\begin{proof}
For a nonempty color class $\items_c$, each item layer of the inner dynamic program has $O(|\items_c|\Pi)$ cardinality--profit states. Over the $|\items_c|$ item layers, the inner DP algorithm therefore takes $O(|\items_c|^2\Pi)$ time. Since $\sum_{c\in\colors}|\items_c|^2\le n^2$, the whole inner phase takes $O(n^2\Pi)$ time. Each stage of the outer dynamic program has $O(n^2\Pi)$ states and, from each of them, considers the pairs of $\mathcal S_c$, which are at most $(|\items_c|+1)\Pi=O(|\items_c|\Pi)$. Since every color class is nonempty by assumption, the total running time of the outer DP algorithm is
\[
    \sum_{c\in\colors}O\bigl(n^2\Pi\cdot|\items_c|\Pi\bigr)=O(n^3\Pi^2),
\]
which dominates the inner phase. For nonnegative profits, $\Pi=\lceil3n/\varepsilon\rceil+1=\Theta(n/\varepsilon)$. For arbitrary integer profits, $\Pi=O(n^2/\varepsilon)$ by the same bounds as in Theorem~\ref{thm:time}. Substitution yields the stated running times.

\end{proof}

The difference between the two approaches lies in how they process a color.
The item-by-item DP algorithm adds one item at a time. It must therefore store the counter $a$ for the number of selected items of the current color, but each state has only the two constant-time transitions \emph{pack} and \emph{do not pack}. If the pseudo-polynomial dimension has $q$ levels, this gives the running time $O(n^4 \, q)$.
The color-by-color program avoids the counter $a$ in its outer table because the inner dynamic program has already summarized every possible packing of the next color by its cardinality, profit, and minimum weight. This removes a factor of at most $n$ from the number of outer states. The price is that an outer state can no longer be updated by a constant number of choices: it must be combined with all attainable profit levels for the next color. For fixed cardinalities, this combination is a convolution along the pseudo-polynomial dimension and contributes a factor $q$. The resulting running time is $O(n^3 \, q^2)$. Thus, the color-by-color recursion exchanges a factor $n$ in the state space for a factor $q$ in the transition cost.
For the exact capacity-indexed programs, $q=b+1$, so this exchange is advantageous when the capacity is small relative to $n$. In the FPTAS, however, $q=\Pi$: already for nonnegative profits, $\Pi=\Theta(n/\varepsilon)$, which is larger than the saved factor $n$ by a factor of order $1/\varepsilon$. Accordingly, the running times are $O(n^5/\varepsilon)$ for the item-by-item scheme and $O(n^5/\varepsilon^2)$ for the color-by-color scheme. With arbitrary integer profits, $\Pi=O(n^2/\varepsilon)$, and the corresponding running times become $O(n^6/\varepsilon)$ and $O(n^7/\varepsilon^2)$, respectively. In this case the ratio between the two is $O(n/\varepsilon)$. Hence, the color-by-color recursion, that is useful for small capacities, offers no improvement once the scaled profit range becomes the pseudo-polynomial dimension.

\section{Conclusions}
\label{sec:conclusions}

In this paper, we studied the approximability of the \ColKP, which generalizes the classical Knapsack Problem by adding a color-ordering requirement of the packed items. This requirement can make negative-profit items necessary in an optimal solution and complicates the use of standard knapsack approximation techniques. Our results show that the \ColKP{} nevertheless admits both a constant-factor approximation algorithm and fully polynomial-time approximation schemes, even when item profits are arbitrary integers.

We began by investigating the structure of \ColKP{} optimal solutions. In particular, we identified when negative-profit items can occur and bounded their number, providing the control over signed profits needed for the approximation schemes. We then developed a linear-time approximation algorithm based on the LP relaxation of the \ColKP{} natural ILP formulation, having a worst-case performance ratio of $1/3$.

To obtain the FPTASs, we adapted the two exact dynamic programming algorithms of~\cite{CHM2026} to work with scaled profits. The first scheme, based on a item-by-item recursion, runs in $O(n^5/\varepsilon)$ time for nonnegative profits and $O(n^6/\varepsilon)$ time for arbitrary integer profits. The second, based instead on a color-by-color recursion, runs in $O(n^5/\varepsilon^2)$ and $O(n^7/\varepsilon^2)$ time, respectively. Thus, although both DP algorithms from the literature yield an FPTAS, the item-by-item recursion gives the better worst-case running times under the direct scaling approach considered here.

These results suggest two directions for further research. One is to reduce the running times of the FPTASs, particularly for instances with negative-profit items, where the signed profit range makes the DP algorithms more computationally expensive. A further potential investigation may concern the development of approximation algorithms with a performance guarantee stronger than $1/3$. Nonetheless, this would require to exploit the structure of color-feasible solutions in ways beyond the LP-rounding approach considered here.


\begin{thebibliography}{99}

\bibitem[Balogh et~al.(2013)]{Balogh2013}
J{\'a}nos Balogh, J{\'o}zsef B{\'e}k{\'e}si, Gyorgy Dosa, Hans Kellerer, and Zsolt Tuza. 2013.
\newblock Black and white bin packing.
\newblock In {\em Approximation and Online Algorithms} (Lecture Notes in Computer Science), Thomas Erlebach and Giuseppe Persiano (Eds.).
\newblock Springer, Berlin, Heidelberg, 131--144.


\bibitem[Balogh et~al.(2015)Balogh, B\'ek\'esi, D\'osa, Epstein, Kellerer, Levin and Tuza]{Balogh2015}
J.~Balogh, J.~B\'ek\'esi, G.~D\'osa, L.~Epstein, H.~Kellerer, A.~Levin, Z.~Tuza.
\newblock Offline black and white bin packing.
\newblock \emph{Theoretical Computer Science}, 596:92--101, 2015.
\newblock doi:10.1016/j.tcs.2015.06.045.

\bibitem[Borges et~al.(2024)Borges, Schouery and Miyazawa]{Borges2024}
Y.~G.~F. Borges, R.~C.~S. Schouery, F.~K. Miyazawa.
\newblock Mathematical models and exact algorithms for the colored bin packing
problem.
\newblock \emph{Computers \& Operations Research}, 164:106527, 2024.
\newblock 

\bibitem[Cacchiani et~al.(2022)Cacchiani, Iori, Locatelli and Martello]{Cacchiani2022}
V.~Cacchiani, M.~Iori, A.~Locatelli, S.~Martello.
\newblock Knapsack problems --- an overview of recent advances. Part I: single
knapsack problems.
\newblock \emph{Computers \& Operations Research}, 143:105692, 2022.

\bibitem[Caprara et~al.(2000)Caprara, Kellerer, Pferschy and Pisinger]{Caprara2000}
A.~Caprara, H.~Kellerer, U.~Pferschy, D.~Pisinger.
\newblock Approximation algorithms for knapsack problems with cardinality
constraints.
\newblock \emph{European Journal of Operational Research}, 123(2):333--345,
2000.

\bibitem[Chajakis and Guignard(1994)]{ChajakisGuignard1994}
E.~D. Chajakis, M.~Guignard.
\newblock Exact algorithms for the setup knapsack problem.
\newblock \emph{INFOR: Information Systems and Operational Research},
32(3):124--142, 1994.

\bibitem[Chandra et~al.(1976)Chandra, Hirschberg and Wong]{ChandraHirschbergWong1976}
A.~K. Chandra, D.~S. Hirschberg, C.~K. Wong.
\newblock Approximate algorithms for some generalized knapsack problems.
\newblock \emph{Theoretical Computer Science}, 3(3):293--304, 1976.

\bibitem[Chan(2018)]{Chan2018}
T.~M. Chan.
\newblock Approximation schemes for 0-1 knapsack.
\newblock In \emph{1st Symposium on Simplicity in Algorithms (SOSA 2018)},
5:1--5:12, 2018.

\bibitem[Chen et~al.(2025)Chen, Lian, Mao and Zhang]{Chen2025}
L.~Chen, J.~Lian, Y.~Mao, G.~Zhang.
\newblock A nearly quadratic-time FPTAS for knapsack.
\newblock \emph{SIAM Journal on Computing}, 2025.

\bibitem[Chebil and Khemakhem(2015)]{ChebilKhemakhem2015}
K.~Chebil, M.~Khemakhem.
\newblock A dynamic programming algorithm for the knapsack problem with setup.
\newblock \emph{Computers \& Operations Research}, 64:40--50, 2015.

\bibitem[Ciccarelli et~al.(2026)Ciccarelli, Helber and M\"uhmer]{CHM2026}
F.~Ciccarelli, A.~Helber, E.~M\"uhmer.
\newblock The colored knapsack problem: structural properties and exact
algorithms.
\newblock arXiv:2602.12214, 2026.

\bibitem[Coniglio et~al.(2021)Coniglio, Furini and San Segundo]{Coniglio2021}
S.~Coniglio, F.~Furini, P.~San Segundo.
\newblock A new combinatorial branch-and-bound algorithm for the knapsack
problem with conflicts.
\newblock \emph{European Journal of Operational Research}, 289(2):435--455,
2021.

\bibitem[Deng et~al.(2023)Deng, Jin and Mao]{DengJinMao2023}
M.~Deng, C.~Jin, X.~Mao.
\newblock Approximating knapsack and partition via dense subset sums.
\newblock In \emph{Proceedings of the 2023 ACM-SIAM Symposium on Discrete
Algorithms (SODA 2023)}, 2961--2979, 2023.

\bibitem[D'Ambrosio et~al.(2018)D'Ambrosio, Furini, Monaci and Traversi]{DAmbrosio2018}
C.~D'Ambrosio, F.~Furini, M.~Monaci, E.~Traversi.
\newblock On the product knapsack problem.
\newblock \emph{Optimization Letters}, 12(4):691--712, 2018.

\bibitem[Furini et~al.(2018)Furini, Monaci and Traversi]{FuriniMonaciTraversi2018}
F.~Furini, M.~Monaci, E.~Traversi.
\newblock Exact approaches for the knapsack problem with setups.
\newblock \emph{Computers \& Operations Research}, 90:208--220, 2018.

\bibitem[Ibarra and Kim(1975)]{IbarraKim1975}
O.~H. Ibarra, C.~E. Kim.
\newblock Fast approximation algorithms for the knapsack and sum of subset
problems.
\newblock \emph{Journal of the ACM}, 22(4):463--468, 1975.

\bibitem[Jin(2019)]{Jin2019}
C.~Jin.
\newblock An improved FPTAS for 0-1 knapsack.
\newblock In \emph{46th International Colloquium on Automata, Languages, and
Programming (ICALP 2019)}, 76:1--76:14, 2019.

\bibitem[Kellerer and Pferschy(1999)]{KellererPferschy1999}
H.~Kellerer, U.~Pferschy.
\newblock A new fully polynomial time approximation scheme for the knapsack
problem.
\newblock \emph{Journal of Combinatorial Optimization}, 3(1):59--71, 1999.

\bibitem[Kellerer and Pferschy(2004)]{KellererPferschy2004}
H.~Kellerer, U.~Pferschy.
\newblock Improved dynamic programming in connection with an FPTAS for the
knapsack problem.
\newblock \emph{Journal of Combinatorial Optimization}, 8(1):5--11, 2004.

\bibitem[Kellerer et~al.(2004)Kellerer, Pferschy and Pisinger]{KPP2004}
H.~Kellerer, U.~Pferschy, D.~Pisinger.
\newblock \emph{Knapsack Problems}.
\newblock Springer, Berlin Heidelberg, 2004.

\bibitem[Lawler(1979)]{Lawler1979}
E.~L. Lawler.
\newblock Fast approximation algorithms for knapsack problems.
\newblock \emph{Mathematics of Operations Research}, 4(4):339--356, 1979.

\bibitem[Magazine and Oguz(1981)]{MagazineOguz1981}
M.~J. Magazine, O.~Oguz.
\newblock A fully polynomial approximation algorithm for the 0-1 knapsack
problem.
\newblock \emph{European Journal of Operational Research}, 8(3):270--273, 1981.

\bibitem[Malaguti et~al.(2026)Malaguti, Paronuzzi and Santini]{Malaguti2025}
E.~Malaguti, P.~Paronuzzi, A.~Santini.
\newblock Algorithms and complexity results for the 0--1 knapsack problem with
group fairness.
\newblock \emph{Optimization Letters}, 20:577--594, 2026.
\newblock 

\bibitem[Megiddo(1984)]{Megiddo1984}
Nimrod Megiddo.
\newblock Linear programming in linear time when the dimension is fixed.
\newblock {\emph Journal of the ACM (JACM)}, 31(1):114--127, 1984.

\bibitem[Mucha et~al.(2019)Mucha, W\k{e}grzycki and W{\l}odarczyk]{Mucha2019}
M.~Mucha, K.~W\k{e}grzycki, M.~W{\l}odarczyk.
\newblock A subquadratic approximation scheme for partition.
\newblock In \emph{Proceedings of the 2019 Annual ACM-SIAM Symposium on
Discrete Algorithms (SODA 2019)}, 70--88, 2019.

\bibitem[Nauss(1978)]{Nauss1978}
R.~M. Nauss.
\newblock The 0--1 knapsack problem with multiple-choice constraints.
\newblock \emph{European Journal of Operational Research}, 2(2):125--131,
1978.

\bibitem[Pferschy and Schauer(2009)]{PferschySchauer2009}
U.~Pferschy, J.~Schauer.
\newblock The knapsack problem with conflict graphs.
\newblock \emph{Journal of Graph Algorithms and Applications}, 13(2):233--249,
2009.

\bibitem[Pferschy and Scatamacchia(2018)]{PferschyScatamacchia2018}
U.~Pferschy, R.~Scatamacchia.
\newblock Improved dynamic programming and approximation results for the
knapsack problem with setups.
\newblock \emph{International Transactions in Operational Research},
25(2):667--682, 2018.

\bibitem[Pferschy et~al.(2021)Pferschy, Schauer and Thielen]{PferschySchauerThielen2021}
U.~Pferschy, J.~Schauer, C.~Thielen.
\newblock Approximating the product knapsack problem.
\newblock \emph{Optimization Letters}, 15:2529--2540, 2021.

\bibitem[Sahni(1975)]{Sahni1975}
S.~Sahni.
\newblock Approximate algorithms for the 0/1 knapsack problem.
\newblock \emph{Journal of the ACM}, 22(1):115--124, 1975.

\end{thebibliography}
\end{document}